\newtheorem{proposition}{Proposition}
\newtheorem{definition}{Definition}
\newtheorem{theorem}{Theorem}
\newtheorem{lemma}{Lemma}
\newtheorem{example}{Example}
\newenvironment{proof}{\noindent{\bf Proof:}}{\hfill$\Box$\vspace*{1mm}}
\def\algbackskip{\hskip-\ALG@thistlm}
\providecommand{\keywords}[1]
{
  \small	
  \textbf{\textit{Keywords---}} #1
}
\title{Discrimination of attractors with noisy nodes in Boolean networks}
\author[1]{Xiaoqing Cheng}
\author[2]{Wai-Ki Ching}
\author[2]{Sini Guo}
\author[3,*]{Tatsuya Akutsu}
\affil[1]{School of Mathematics and Statistics, Xi'an Jiaotong University, Xi'an, 710049, China}
\affil[2]{Advanced Modelling and Applied Computing Laboratory, Department of Mathematics, The University of Hong Kong, Pokfulam Road, Hong Kong}
\affil[2]{Bioinformatics Center, Institute for Chemical Research, Kyoto University, Gokasho, Uji, Kyoto, 611-0011, Japan}
\affil[*]{Corresponding author: Tatsuya Akutsu, takutsu@kuicr.kyoto-u.ac.jp}
\begin{document}
\maketitle

\abstract{Observing the internal state of the whole system using a small number of sensor nodes is important in analysis of complex networks.
Here, we study the problem of determining the minimum number
of sensor nodes to discriminate attractors under the assumption
that each attractor has at most $K$ noisy nodes.
We present exact and approximation algorithms for
this minimization problem.
The effectiveness of the algorithms is also demonstrated by
computational experiments using both synthetic data and
realistic biological data.
}

\keywords{Observability; Boolean networks; attractors; genetic networks; biomarkers}

\section{Introduction}

It is important for analyzing complex network systems
to select a small set of nodes (i.e., \emph{sensor nodes})
whose measurements can determine all other state variables.
Relationships between structure of complex networks and
sensor nodes have been analyzed recently,
especially for linear systems \cite{yan2015spectrum,  liu2013observability}. 
However, biological systems contain non-linear components and thus
exhibit switch-like behaviors.
Therefore, it is essential to study the observability of non-linear systems.
Even though most biological phenomena manifest them in a continuous domain, the binary expression shows promising and useful results \cite{Shmulevich2002, Watterson2008}.
The Boolean network (BN) is one of the most studied mathematical models for genetic networks \cite{kauffman1969b, kauffman1993},
in which the state of each gene is represented by 0 (off) or 1 (on).
Observability of BNs has been widely studied \cite{cheng2010analysis, laschov2013observability, li2015controllability}. 
However, it is impossible in most cases to observe all internal states
from a small set of sensor nodes because BN is a highly non-linear network \cite{li2015controllability}.
Therefore, another approach has been proposed: discrimination of attractors,
where an {\it attractor} is a collection of state cycles.
Attractors
are classified into \emph{singleton attractors} and
\emph{periodic attractors}, where the former and latter correspond to
statically steady states and periodically steady states, respectively.
The purpose of discrimination of attractors is to determine the minimum set of sensor nodes required to discriminate all given attractors.
Since attractors are often interpreted as cell types \cite{huang1999},
the discrimination problem corresponds to a problem of selecting
the minimum number of genes that are needed to identify types of cells
(e.g., types of cancers), which is closely related to selection of
\emph{biomarkers} or \emph{marker genes}, a very important topic
in biological and medical sciences \cite{shen2010, bell2010}.

This discrimination problem was proposed in \cite{Qiu2015},
and has been extensively studied \cite{Cheng2017b}.
All the results assume clean input data.
However, gene expression noise is inevitable due to environmental fluctuations and the stochasticity of biochemical reactions such as transcription, chromatin remodeling and post-translational regulation \cite{chalancon2012interplay, love2014,wu2020orthogonal}.
Therefore, proposing a robust discrimination model is essential towards
robust classification of cell types.
To this end, we reformulate the discrimination problem on BNs
by assuming the number of noisy nodes is bounded by a number $K$,
where this number is closely related to the Hamming distance,
a standard distance measure for binary vectors.

In this paper,
we consider 
the discrimination problem for attractors with noisy nodes firstly,
and present an exact algorithm. Another
polynomial-time approximation algorithm is proposed to in order to balance the tradeoff between the size of the target set and the overall time complexity.
Discrimination of singleton attractors with noisy nodes is a special case of our general discrimination problem here.
In this special case, the distance between any pair of attractors equals to Hamming distance between two attractors' states
and thus it might be possible to develop simpler and/or faster algorithms.
Therefore we present an exact algorithm and a polynomial-time approximation algorithm specified for discrimination of singleton attractors with noisy nodes afterwards.
Finally, we perform computational experiments using
synthetic data and realistic biological data.
We remark that in our study, we assume a set of attractors are given
without knowing the internal structure of a BN. 
Although enumerating all the singleton attractors is an NP-hard problem,
there are some algorithms developed to find
all the attractors up to moderate size networks \cite{akutsu1998,veliz2014,zanudo2013}.
Furthermore, we assume that those attractors can be given independent of BN structures since they will be directly obtained from the expression data of stable cells.

\section{Discrimination of Attractors with Noisy Nodes}

\begin{table}[t]
\caption{List of Notations}
\begin{center}
\resizebox{0.9\textwidth}{!}{
\begin{tabular}{cl}
\hline
\hline
\multicolumn{2}{c}{Common Notations}\\

\hline
\hline
$m$ & Number of attractors\\
$n$ & Number of genes\\
$M$ & Number of POAs (i.e., $M = \binom{m}{2}$)\\
$x_{T(i_1,i_2,m)}, 1\leq T(i_1, i_2, m)\leq M$& A pair of attractors (POA), $(Att_{i_1}, Att_{i_2})$\\
$U=\{x_l, 1\leq l \leq M\}$ & A set of POAs need to be discriminated\\
\hline
\hline
\multicolumn{2}{c}{Notations in {\bf MinDattNN}}\\
\hline
\hline
${\bf v}$ & A 0-1 vector\\
$V$ & A set of nodes corresponding to genes\\
\multirow{2}{*}{${\bf v}_{\hat{V}}, \hat{V}\subseteq V$} & $|\hat{V}|$-dimensional vector consisting of \\
& elements of ${\bf v}$ that correspond to $\hat{V}$\\
$Att_{i_1}=[{\bf v }(0),\ldots,{\bf v }(p(i_1)-1)]$&\multirow{2}{*}{Two periodic attractors}\\
$Att_{i_2}=[{\bf w }(0),\ldots,{\bf w }(p(i_2)-1)]$ & \\
$p(i)$ & The period of $Att_i$\\
$Ser(Att_{i_1}, \hat{V}, t)=$ & An infinite sequence of $|\hat{V}|$-dimensional \\
$[{\bf v}_{\hat{V}}(t), {\bf v}_{\hat{V}}(t+1), {\bf v}_{\hat{V}}(t+2), \ldots]$ & vectors beginning from time step $t$\\
$Dist(Att_{i_1}, Att_{i_2}, \hat{V})$ &
Distance between $Att_{i_1}$ and $Att_{i_2}$ by observing $\hat{V}$\\
$D[T(i_1, i_2, m), T(j_1, j_2, n)]$ & Distance between $(Att_{i_1}, Att_{i_2})$ by observing $\{v_{j_1}, v_{j_2}\}$\\
$s_{T(j_1,j_2, n)}=\{x_{T(i_1, i_2, m)}| D[T(i_1, i_2, m), T(j_1, j_2, n)]\neq 0\}$ & A set of POAs that can be discriminated by $\{v_{j_1}, v_{j_2}\}$\\
$S=\{s_1, s_2, \ldots, s_{\frac{n(n-1)}{2}}\}$ & A set of all candidate sensor pairs\\
$r_{T(i_1, i_2, m)}$ & A dummy variable 
indicating the distance  \\
$\leq Dist(Att_{i_1}, Att_{i_2}, \hat{V})$&between a POA under the current discriminator\\
\multirow{2}{*}{$G_D(T(i_1, i_2, m), \hat{V})$} & Adjacent graph of $T(i_1, i_2,m)-$th row of \\
& $D$ constrained on nodes in $\hat{V}$\\
$MC(G_D(T(i_1, i_2, m), \hat{V}))$ & A maximum clique of $G_D(T(i_1, i_2, m), \hat{V})$\\
\hline
\hline
\multicolumn{2}{c}{Notations in {\bf MinDSattNN}}\\
\hline
\hline

$Att$ & An $m\times n$ attractor matrix\\
$J=\{j_1, j_2, \ldots, j_k\}$ & A set of column/row indices\\
$A[i,-](resp. A[-,j])$ &  The $i$-th row (resp. $j$-th column) of $A$\\
\multirow{2}{*}{$A[i, J](resp. A[J, j])$} & The sub-matrix of $A[i, -]$ (resp. $A[-, j]$) consisting \\
& of the $j_1, j_2, \ldots, j_k$-th columns (resp. rows)\\
$H(x, y)$ & Hamming distance between vectors $x$ and $y$ \\
$s_j=\{x_{T(i_1, i_2, m)}| Att[i_1, j]\neq Att[i_2, j]\}$ & A set of POAs that can be discriminated by $v_j$\\
$S=\{s_1, s_2, \ldots, s_n\}$ & A set of candidate sensor nodes\\
\multirow{2}{*}{$r_{T(i_1, i_2, m)}=H(Att[i_1, J], Att[i_2, J])$} & A dummy variable indicating the distance between\\
& a POA under the current discriminator \\

\hline
\end{tabular}
}
\end{center}
\label{tab:notations}
\end{table}

A list of notations used in this paper is given in Table~\ref{tab:notations}.
Firstly, we give a mathematical formulation of finding a minimum discriminator
for attractors with noisy nodes ({\bf MinDattNN}).
To this end,
we define the distance between a pair of attractors (POA) by observing a set of nodes $\hat{V}$.
This new definition is needed because
a periodic attractor is a periodically steady time series.
For a set $\hat{V}\subseteq V$ and an $n$-dimensional $0$-$1$ vector
${\bf v}$, ${\bf v}_{\hat{V}}$ denotes the $|\hat{V}|$-dimensional vector consisting of elements of ${\bf v}$ that correspond to $\hat{V}$.
For example, if $n=5, {\bf v}=[1,1,0,1,0]$ and
$\hat{V}=\{v_2, v_3, v_5\}$,
then ${\bf v}_{\hat{V}}=[1, 0, 0]$.
Let $Att_{i_1}$ and $Att_{i_2}$ be two periodic attractors
and $p(i)$ be the period of $Att_i$:
$
Att_{i_1}=[{\bf v }(0),{\bf v }(1),\ldots,{\bf v }(p(i_1)-1)]
$
and
$
Att_{i_2}=[{\bf w }(0),{\bf w }(1),\ldots,{\bf w }(p(i_2)-1)].
$
Define $Ser(Att_{i_1}, \hat{V}, t)$ as an infinite sequence of $|\hat{V}|$-dimensional vectors beginning from time step $t$:
$
Ser(Att_{i_1}, \hat{V}, t)=[{\bf v}_{\hat{V}}(t), {\bf v}_{\hat{V}}(t+1), {\bf v}_{\hat{V}}(t+2), \ldots].
$
Let $Dist(Att_{i_1}, Att_{i_2}, \hat{V})$ be the distance between $Att_{i_1}$ and $Att_{i_2}$ by observing $\hat{V}$: \\
\begin{tabular}{ll}
&$Dist(Att_{i_1}, Att_{i_2}, \hat{V})=$ \\
&$\displaystyle \min_{t=0}^{p-1} \bigg\{
\displaystyle \sum_{v_i\in \hat{V}}I\{Ser(Att_{i_1}, \{v_i\}, 0), Ser(Att_{i_2}, \{v_i\}, t)\} \bigg\}$\\
\end{tabular}\\
where $p=LCM(p(i_1), p(i_2))$ (LCM means the least common multiple) and\\
\begin{tabular}{ll}
\ & $I\{Ser(Att_{i_1}, \{v_i\}, 0), Ser(Att_{i_2}, \{v_i\}, t)\}$\\
= & $\left\{
\begin{array}{ll}
1, & \mbox{if} \ Ser(Att_{i_1}, \{v_i\}, 0)\neq Ser(Att_{i_2}, \{v_i\}, t), \\
0, & \mbox{otherwise}.
\end{array}
\right.$
\end{tabular}\\
Then $Dist(Att_{i_1}, Att_{i_2}, V)=0$ if and only if these two attractors are identical.
 In the noisy case, it is hypothesized that noisy nodes vary in
different attractors, thus at most $2K$ nodes' value may flip from $1$ to $0$ or vice versa for a pair of attractors.
Therefore, in the worst case,
a POA is discriminated only if there are at least $2K+1$ different nodes observed, that is $Dist(Att_{i_1}, Att_{i_2}, V)>2K$.
\begin{definition}
(Minimum Discriminator for Attractors with Noisy Nodes 
{\bf [MinDattNN]})\\
{\bf Input}: A set of $m$ attractors $\{Att_1, Att_2, \ldots, Att_m\}$ where
$Att_i$ is a $p(i) \times n$ binary matrix ($n$ is number of genes), and an integer $K$ denoting
the maximum number of noisy nodes per attractor.\\
{\bf Output}: A minimum cardinality set $\hat{V}$ of nodes
such that
$Dist(Att_{i_1}, Att_{i_2}$, $\hat{V}) \geq 2K+1$
holds for all $i_1, i_2$ with $1 \leq i_1< i_2 \leq m$.
\end{definition}

\subsection{Exact Algorithm for {\bf MinDattNN}}

Inspired by Lemma 1 in \cite{Cheng2017b}, we consider gene pairs. We first construct a binary $\binom{m}{2} \times \binom{n}{2}$ matrix $D$ by
$D[T(i_1, i_2, m), T(j_1, j_2, n)]
=Dist(Att_{i_1}, Att_{i_2},$ $\{v_{j_1}, v_{j_2}\})$. Here $T(i_1 ,i_2, m)=(i_1-1)m -\frac{i_1(i_1-1)}{2} +i_2-i_1$. 
Then for each POA $(Att_{i1}, Att_{i_2})$ and a set $\hat{V}$, we construct an undirected graph $G_D(T(i_1, i_2, m), \hat{V})=\langle \hat{V}, \hat{E}\rangle$ where $\hat{E}=\{e=(v_{j_1},v_{j_2}) | v_{j_1}, v_{j_2}\in \hat{V}, D[T(i_1, i_2, m), T(j_1, j_2, n)]=0\}$.  
From Lemma \ref{lem:clique} (given below),
$Dist($ $Att_{i_1}, Att_{i_2}, \hat{V})$ can be calculated by computing a maximum clique of
$G_D(T(i_1, i_2, m), \hat{V})$, then we need to decide whether $Dist(Att_{i_1}, Att_{i_2}, \hat{V})$ $\geq 2K+1$
holds for all $i_1, i_2$ with $1 \leq i_1< i_2 \leq m$. 
Note that the exceptional
case that all node in $G_D(T(i_1, i_2, m), \hat{V})$ are isolated needs to be discussed based on  whether these two attractor can be discriminated by observing any nodes in $\hat{V}$ (line 7-9 in Algorithm \ref{al:exp}). 
Example~\ref{example1} is an illustrative example for calculation of $Dist(Att_{i_1}, Att_{i_2}, \hat{V})$.
The resulting algorithm is given in Algorithm \ref{al:exp},
and its time complexity is analyzed in Theorem \ref{thm:exact}.

\begin{lemma}
Suppose that $G_D(T(i_1, i_2, m), \hat{V})$  has at least one edge.
$\gamma=|MC(G_D(T(i_1, i_2, m), \hat{V}))|$ is the number of nodes in the maximum clique of
$G_D(T(i_1, i_2, m), \hat{V})$.  
Then $Dist(Att_{i_1}, Att_{i_2}, \hat{V})= |\hat{V}|-\gamma$.
\label{lem:clique}
\end{lemma}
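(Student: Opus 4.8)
The plan is to rewrite both sides of the claimed identity in terms of the ``matching shifts'' of individual nodes, and then to reduce the clique computation to a Chinese‑Remainder‑type feasibility question. For each node $v_i\in\hat V$ let
$T_i=\{\,t\in\{0,1,\ldots,p-1\}:Ser(Att_{i_1},\{v_i\},0)=Ser(Att_{i_2},\{v_i\},t)\,\}$,
where $p=LCM(p(i_1),p(i_2))$. Unwinding the definitions of $Dist$ and of the indicator $I$ gives at once
\[
Dist(Att_{i_1},Att_{i_2},\hat V)=\min_{0\le t\le p-1}\bigl|\{v_i\in\hat V: t\notin T_i\}\bigr|
=|\hat V|-\max_{0\le t\le p-1}\bigl|\{v_i\in\hat V: t\in T_i\}\bigr|,
\]
and, applying the same rewriting to a two‑element node set, $D[T(i_1,i_2,m),T(j_1,j_2,n)]=0$ holds precisely when $T_{j_1}\cap T_{j_2}\neq\emptyset$. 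Hence it suffices to prove that $\max_t|C_t|=\gamma$, where $C_t:=\{v_i\in\hat V:t\in T_i\}$.

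\noindent\textbf{($\leq$).} Fix any $t$. Any two distinct nodes of $C_t$ agree simultaneously at shift $t$, so their two‑element distance is $0$ and they are joined by an edge of $G_D(T(i_1,i_2,m),\hat V)$; thus $C_t$ is a clique and $|C_t|\le\gamma$. (The hypothesis that $G_D(T(i_1,i_2,m),\hat V)$ has an edge ensures $\gamma\ge 2$ and that some $C_t$ is nonempty, so the two quantities are genuinely comparable.)

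\noindent\textbf{($\geq$).} Let $C$ be a maximum clique, $|C|=\gamma$. I will exhibit a single shift $t^\star$ with $C\subseteq C_{t^\star}$. The structural fact needed is that every nonempty $T_i$ is an arithmetic progression modulo $p$: regarding the $v_i$‑th coordinate sequences of $Att_{i_1}$ and $Att_{i_2}$ as cyclic words of length $p$, membership $t\in T_i$ says one word is the cyclic rotation of the other by $t$; once one such $t$ exists, both words share a common minimal cyclic period $d_i$, which divides $\gcd(p(i_1),p(i_2))$ (hence $p$), and the rotations fixing a word are exactly those by multiples of its minimal period, so $T_i=\{t:t\equiv r_i\pmod{d_i}\}$ for a suitable $r_i$. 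Since every pair $v_{j_1},v_{j_2}\in C$ has $T_{j_1}\cap T_{j_2}\neq\emptyset$, we get $r_{j_1}\equiv r_{j_2}\pmod{\gcd(d_{j_1},d_{j_2})}$ for all such pairs; by the generalized Chinese Remainder Theorem the whole system $\{\,t\equiv r_i\pmod{d_i}: v_i\in C\,\}$ is solvable, with solution set a coset modulo $\mathrm{lcm}_{v_i\in C}d_i$, which divides $p$. Taking the representative $t^\star\in\{0,\ldots,p-1\}$ yields $t^\star\in T_i$ for every $v_i\in C$, i.e. $C\subseteq C_{t^\star}$, so $\max_t|C_t|\ge\gamma$. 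Combining the two inequalities gives $Dist(Att_{i_1},Att_{i_2},\hat V)=|\hat V|-\gamma$.

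The delicate step is the ``($\geq$)'' direction: a maximum clique only certifies that the matching‑shift sets of its nodes are \emph{pairwise} nonempty‑intersecting, and promoting this to one common shift is exactly where the arithmetic‑progression description of the $T_i$ and the generalized CRT are used — without that number‑theoretic structure the statement would be false, since pairwise‑intersecting families of subsets of $\{0,\ldots,p-1\}$ need not admit a common element. A secondary point to handle carefully is the bookkeeping of the degenerate sub‑cases (empty cliques, all vertices isolated), which is why the lemma is stated under the ``at least one edge'' hypothesis and the remaining case is treated separately in the algorithm.
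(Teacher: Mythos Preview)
Your proof is correct and is, in fact, more complete than the paper's. Both arguments begin identically: one rewrites $Dist(Att_{i_1},Att_{i_2},\hat{V})=|\hat{V}|-\max_t|C_t|$, where $C_t$ is the set of nodes agreeing at shift $t$ (the paper writes this set as $\hat{V}_1(t)$), and both observe that each $C_t$ is a clique in $G_D$, yielding $\max_t|C_t|\le\gamma$. The paper then simply asserts that ``Maximizing $|\hat{V}_1(t)|$ equals to calculating the number of nodes in a maximum clique,'' without justifying the reverse inequality $\gamma\le\max_t|C_t|$. You supply precisely the missing step: by showing that each nonempty $T_i$ is an arithmetic progression modulo $p$ and invoking the generalized Chinese Remainder Theorem, you promote the pairwise intersection property (which is all that clique membership in $G_D$ guarantees) to the existence of a single common shift $t^\star$ with $C\subseteq C_{t^\star}$. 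This is a genuine and necessary ingredient---without the arithmetic‑progression structure of the $T_i$, pairwise‑intersecting families of subsets of $\{0,\ldots,p-1\}$ need not have a common element, as you yourself note. So the two proofs share their skeleton and their easy direction, but your CRT argument is a substantive addition that closes a gap the paper's proof leaves open.
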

\begin{proof}
Define\\
$Dif_{(i_1, i_2)}(\{v_j\}, t)$=
$\left\{
\begin{aligned}
0, & \ \mbox{if}\  Ser(Att_{i_1}, \{v_j\}, 0)
=Ser(Att_{i_2}, \{v_j\}, t),\\
1, & \ \mbox{otherwise},
\end{aligned}
\right.$\\
Without considering order of nodes in $\hat{V}$, $Dif_{(i_1, i_2)}(\hat{V}, t)$ has the form of
$
Dif_{(i_1, i_2)}(\hat{V}, t)=\underbrace{00\cdots0}_{\hat{V}_1(t)}\underbrace{11\cdots1}_{\hat{V}_2(t)}.  
$
From definition, we know $Dist(Att_{i_1}, Att_{i_2},$ $\hat{V})=\displaystyle\min_{t}|\hat{V}_2(t)|=|\hat{V}|-\displaystyle\max_{t}|\hat{V}_1(t)|$. On the other hand, if $v_{j_1}, v_{j_2}\in \hat{V}_1(t)$, then $Dist(Att_{i_1}, Att_{i_2}, \{v_{j_1}, v_{j_2}\})=0$, this means all nodes in $\hat{V}_1(t)$ forms a clique. Maximizing $|\hat{V}_1(t)|$ equals to calculating the number of nodes in a maximum clique of $G_D(T(i_1, i_2, m),\hat{V})$, which completes the proof.
\end{proof}

{\centering
\begin{minipage}{0.9\textwidth}
 \begin{algorithm}[H]
    \caption{Exact algorithm for {\bf MinDattNN}}
    \hspace*{\algorithmicindent} \textbf{Input:}
set of attractors, set of nodes $V$, integer $K$ \\
    \hspace*{\algorithmicindent} \textbf{Output:}
set of nodes $\hat{V}$
    \begin{algorithmic}[1]
    \State Calculate matrix $D$
    \For{$k=2K+1$ to $n$}
    \For{$\hat{V}\subset V$ and $|\hat{V}|=k$}
    \State $sig \gets 0$; 
    \For{$1\leq i_1 \leq i_2 \leq m$}
    \State $sig_2 \gets \displaystyle\sum_{v_{j_1},v_{j_2}\in \hat{V}}D[T(i_1,i_2, m),T(j_1, j_2,n)]$ 
    \If{$sig_2=(|\hat{V}|-1)|\hat{V}|$} $\gamma\gets 0$ 
    \Else \  $\gamma \gets |MC(G_D(T(i_1,i_2, m),\hat{V}))|$ 
    \EndIf
    \If{$k-\gamma\geq 2K+1$} $sig \gets sig+1$ \Else{\ Break} \EndIf
    \EndFor
     \If{$sig=M$} \Return $\hat{V}$ \EndIf
    \EndFor
    \EndFor
   \end{algorithmic}
    \label{al:exp}
    \end{algorithm}
    \end{minipage}
}

\begin{figure}[t]
\begin{minipage}[b]{0.45\textwidth}
\centering
\begin{tikzpicture}
[> = stealth, 
	shorten > = 0.1pt, 
	auto,
	node distance = 3cm, 
	semithick 
	,scale=.8,auto=left,every node/.style={circle,draw, inner sep=2pt}]
\node (n1) at (0,1)		{$v_1$};
	\node (n2) at (1,0)  	{$v_2$};
	\node (n3) at (0.6,-1) 	{$v_3$};
	\node (n4) at (-0.6,-1) 	{$v_4$};
	\node (n5) at (-1,0)    {$v_5$};
\end{tikzpicture}
\end{minipage}
\begin{minipage}[b]{0.45\textwidth}
\centering
\begin{tikzpicture}
[> = stealth, 
	shorten > = 0.1pt, 
	auto,
	node distance = 3cm, 
	semithick 
	,scale=.8,auto=left,every node/.style={circle, draw, inner sep=2pt}]
\node (n1) at (0,1)		{$v_1$};
	\node (n2) at (1,0)  	{$v_2$};
	\node (n3) at (0.6,-1) 	{$v_3$};
	\node (n4) at (-0.6,-1) 	{$v_4$};
	\node (n5) at (-1,0)    {$v_5$};
	\draw (n1)--(n4);
\end{tikzpicture}
\captionsetup{font=footnotesize}
\end{minipage}
\captionof{figure}{(a) $G_D(T(1,2,3),V)$ (left) and (b) $G_D(T(2,3,3),V)$ (right) }
\label{fig:example1}
\end{figure}
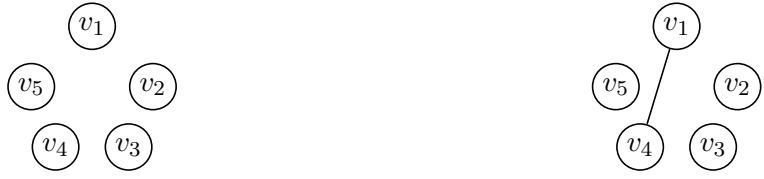

\begin{example}\label{example1}
Let $Att_1=[00001,11100]$, $Att_2=[10100]$, $Att_3=[11001]$ and $V=\{v_1, v_2, v_3, v_4, v_5\}$.
Part of $D$ is shown below, where gene pairs with $v_5$ are omitted.
$G_D(T(1, 2, 3), V)$ and $G_D(T(2, 3, 3),$ $V)$  are shown in Fig. \ref{fig:example1}.\\
{\small
$$
D= 
\left(
\begin{array}{cccccc|l}
(v_1, v_2) & (v_1, v_3) & (v_1, v_4) & (v_2, v_3) & (v_2, v_4) & (v_3, v_4) &\\
\hline
2 & 2 & 1 & 2 & 1 & 1 &(Att_1,Att_2)\\
2 & 2 & 1 & 2 & 1 & 1 &(Att_1, Att_3)\\
1 & 1 & 0 & 2 & 1 & 1 & (Att_2, Att_3)
\end{array}
\right),
$$
}
{\bf Case 1}: Consider $(Att_2, Att_3)$ and $\hat{V}=\{v_1, v_2, v_4\}$, then $\hat{E}=\{(v_1, v_4)\}\neq \emptyset$.
From Lemma \ref{lem:clique}, $Dist(Att_2, Att_3$, $\hat{V})=3-|MC(G_D(T(2,3,3),\hat{V}))|=1$ holds.
This can be verified since $Dist(Att_2, Att_3, \hat{V})$ equals to the Hamming distance between vectors $100$ and $110$.\\
{\bf Case 2}: Consider $(Att_1, Att_2)$ and $\hat{V}=\{v_1, v_2, v_3\}$, then $\hat{E}=\emptyset$.
In this case, $Dist(Att_1, Att_2, \hat{V})=3$ since these two attractors are discriminated by observing any node in $\hat{V}$,
which corresponds to Line 7 in Algorithm \ref{al:exp}.\\
{\bf Case 3}: Consider $(Att_2, Att_3)$ and $\hat{V}=\{v_1, v_2, v_3\}$, then $\hat{E}=\emptyset$.
In this case, $Dist(Att_2, Att_3, \hat{V})=3-|MC(G_D(T(2,3,3),\hat{V}))|=2$ since these two attractors are the same by observing $v_1$ but are discriminated by observing $v_2$ or $v_3$, which corresponds to Line 8 in Algorithm \ref{al:exp}.\\
Finally,
we can see that $\hat{V}=\{v_1, v_2, v_3, v_5\}$ is a solution of {\bf MinDattNN} for $K=1$ because the distance between any attractor pair is 3 or 4.
\end{example}

\begin{theorem}
 Algorithm \ref{al:exp} computes an optimal discriminator $\hat{V}^*$ in
$O(3^{|\hat{V}^*|/3}n^{|\hat{V}^*|}m^2+m^2 n^2 p)$ time,
where $p=LCM\{p(1), p(2), \ldots, p(m)\}$.
\label{thm:exact}
\end{theorem}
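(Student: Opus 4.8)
The plan is to analyze Algorithm~\ref{al:exp} line by line, separating the one-time preprocessing cost (computing $D$) from the cost of the double loop over candidate sets $\hat V$. First I would bound the cost of constructing $D$. The matrix $D$ has $\binom{m}{2}$ rows and $\binom{n}{2}$ columns, so it has $O(m^2 n^2)$ entries; each entry $D[T(i_1,i_2,m),T(j_1,j_2,n)]$ equals $Dist(Att_{i_1},Att_{i_2},\{v_{j_1},v_{j_2}\})$, which by definition is a minimum over $t=0,\dots,p-1$ (with $p=LCM(p(i_1),p(i_2))$, hence $p\le LCM\{p(1),\dots,p(m)\}$) of a sum of two indicator terms, and each indicator compares two periodic sequences of length $O(p)$. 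This gives $O(p)$ work per entry, for a total of $O(m^2 n^2 p)$ — the second term in the claimed bound.

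Next I would bound the main search. The outer two loops range over all subsets $\hat V\subseteq V$ with $2K+1\le|\hat V|\le n$. The key observation is that the algorithm returns as soon as it finds a feasible $\hat V$, and it examines subsets in increasing order of size, so it never inspects any subset of size exceeding $|\hat V^\ast|$; hence the number of subsets considered is $O\!\big(\sum_{k=2K+1}^{|\hat V^\ast|}\binom{n}{k}\big)=O(n^{|\hat V^\ast|})$. For each such $\hat V$ (of size $k\le|\hat V^\ast|$), the inner loop runs over the $M=\binom{m}{2}=O(m^2)$ POAs; for each POA the work is dominated by computing $\gamma=|MC(G_D(T(i_1,i_2,m),\hat V))|$, a maximum clique on a graph with $k\le|\hat V^\ast|$ vertices. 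Invoking the standard Moon--Moser / Robson bound, a maximum clique on $k$ vertices can be found in $O(3^{k/3})=O(3^{|\hat V^\ast|/3})$ time (the summations $sig_2$ and the isolated-node test in lines 6--8 are cheaper, $O(k^2)$). Multiplying, the search costs $O\!\big(3^{|\hat V^\ast|/3}\,n^{|\hat V^\ast|}\,m^2\big)$ — the first term.

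Finally I would combine the two contributions additively, $O\!\big(3^{|\hat V^\ast|/3}n^{|\hat V^\ast|}m^2 + m^2 n^2 p\big)$, and argue correctness of the reported set in passing: by Lemma~\ref{lem:clique}, when $G_D(T(i_1,i_2,m),\hat V)$ has an edge, $Dist(Att_{i_1},Att_{i_2},\hat V)=|\hat V|-\gamma=k-\gamma$, which is exactly the quantity tested in line~9; the edgeless cases are handled correctly by lines~6--8 (if $sig_2=(k-1)k$ every vertex is isolated, so every single observed node already discriminates the pair and $\gamma$ is set to $0$, giving $Dist=k$; otherwise there is at least one non-isolated vertex, i.e.\ a pair with $D$-entry $0$, and the maximum clique still has size $1$, again matching Lemma~\ref{lem:clique} with the convention $\gamma=1$). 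Since subsets are scanned in nondecreasing size, the first $\hat V$ passing all $M$ tests is of minimum cardinality, so $\hat V^\ast$ is optimal. I expect the main obstacle to be the bookkeeping in the subset-count estimate — specifically justifying that early termination really caps every examined subset at size $|\hat V^\ast|$ and that $\sum_{k\le|\hat V^\ast|}\binom{n}{k}=O(n^{|\hat V^\ast|})$ absorbs the lower-order binomials — together with citing the correct exponential-time maximum-clique bound; the rest is routine accounting.
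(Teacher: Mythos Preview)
Your proposal is correct and follows essentially the same approach as the paper: bound the cost of building $D$ as $O(m^2n^2p)$, bound the number of candidate subsets by $O(n^{|\hat V^\ast|})$ using early termination, and charge $O(3^{|\hat V^\ast|/3})$ per maximum-clique computation (the paper cites Bron--Kerbosch, you cite Moon--Moser/Robson; both give the same bound). Your write-up is in fact more detailed than the paper's three-line proof.

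One small correction to your correctness aside: your dichotomy for lines~6--8 is miswired. The condition $sig_2=(k-1)k$ means every pair has $D$-entry equal to $2$, which is strictly stronger than ``all vertices isolated'' (no $D$-entry equal to $0$). In the ``otherwise'' branch you do \emph{not} necessarily have a pair with $D$-entry $0$; you may still have an edgeless graph but with some $D$-entries equal to $1$ (this is exactly Case~3 in Example~\ref{example1}), in which case the maximum clique has size $1$ and the formula $k-\gamma$ still gives the right value. And if there \emph{is} a pair with $D$-entry $0$, the graph has an edge and the maximum clique has size at least $2$, so Lemma~\ref{lem:clique} applies directly. This does not affect your time-complexity argument, which is the content of the theorem, but you should straighten out the case analysis before writing it up.
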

\begin{proof}
The correctness of the algorithm follows from Lemma~\ref{lem:clique}.
$D$ can be calculated in $O(m^2 n^2 p)$ time
by a naive algorithm.
We can apply the Bron-Kerbosch algorithm
to calculate the maximum clique,
whose time complexity is $O(3^{n/3})$ for graphs with $n$ nodes.
Therefore, the total time complexity
is $O(3^{|\hat{V}^*|/3}n^{|\hat{V}^*|}m^2+m^2 n^2 p)$,
where $|\hat{V}^*|$ is the minimum number of needed nodes.
\end{proof}

\subsection{Approximation Algorithm for {\bf MinDattNN}}

Algorithm \ref{al:app} is
a greedy-type
approximation algorithm running in $O(poly(m,n,p))$ time,
which is much more efficient than Algorithm \ref{al:exp}.
Here, we introduce some notations.
$x_{T(i_1,i_2,m)}$ means a POA $(Att_{i_1}, Att_{i_2})$, and $s_{T(j_1, j_2,n )}=\{x_{T(i_1, i_2, m)}|D[T(i_1, i_2, m), T(j_1, j_2, n)]\neq 0\}$. For instance, $x_1$ means $(Att_1, Att_2)$ and $s_1=\{x_1, x_2, x_3\}$ in Example \ref{example1}.
Notice that we adopt $r_{T(i_1, i_2, m)}$ to record the distance between $Att_{i_1}$ and $Att_{i_2}$,
and $r_{T(i_1, i_2, m)}\leq Dist(Att_{i_1}, Att_{i_2}, \hat{V})$ holds where
$\hat{V}$ is the current discriminator set.
In this greedy algorithm, a pair of nodes is added in each iteration, instead of a single node.
It is because there exist cases in which the single node addition
strategy fails
(Proposition~\ref{prop:nofeasible}),
whereas the node pair addition strategy can always find a feasible solution 
(Proposition~\ref{prop:add2}). 
An illustrative example is given in Example 2.

{\centering
\begin{minipage}{0.9\textwidth}
 \begin{algorithm}[H]
    \caption{Approximation algorithm for {\bf MinDattNN}}
    \hspace*{\algorithmicindent} \textbf{Input:}
$U=\{x_1,x_2,\ldots, x_{\binom{m}{2}}\}$, $S=\{s_1, s_2, \ldots, s_{\binom{n}{2}}\}$, integer $K$\\
    \hspace*{\algorithmicindent} \textbf{Output:}
set of nodes $\hat{V}$
    \begin{algorithmic}[1]
    \State Calculate matrix $D$
 \State $\hat{V}\gets \emptyset, r_l=0$ for $1\leq l \leq M$
\While{$U\neq \emptyset$ and $S\neq \emptyset$}
\State Find $s_{T(j_1, j_2, n)}\in S$ maximize $|s_{T(j_1, j_2, n)}\cap U|$
\State $\hat{V} \gets \hat{V}\cup \{v_{j_1}, v_{j_2}\}$, 
\State $S\gets S-\{s_{T(k, k', n)} | k=j_1 ~\text{or}~
k'=j_2\}$;
\For{all $x_l \in s_{T(j_1,j_2, n)}$ and $r_{l}<2K+1$}
\State $r_l\leftarrow r_l+D(l,T(j_1,j_2,n))$
\If{$r_l\geq 2K+1$} $U\leftarrow U-\{x_l\}$ \EndIf
\EndFor
\EndWhile
\If{$S=\emptyset$} $\hat{V}\gets V$ \EndIf
    \end{algorithmic}
    \label{al:app}
    \end{algorithm}
    \end{minipage}
    }
    
\begin{proposition}
Suppose that $(Att_{i_1},  Att_{i_2})$ is a pair of attractors and
$\hat{V}$ is a discriminator set such that
$Dist(Att_{i_1}, Att_{i_2}, \hat{V}) =N$.
Then it is possible that for any node
$v_j\in V-\hat{V},Dist(Att_{i_1}, Att_{i_2}, \hat{V}\cup \{v_j\})=N$,
but there exists $v_i \in \hat{V}$ such that
$Dist(Att_{i_1}, Att_{i_2}, \{v_i, v_j\})\neq 0$.
\label{prop:nofeasible}
\end{proposition}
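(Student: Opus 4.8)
The statement is an existence claim (``it is possible that \dots''), so the plan is to exhibit one explicit instance — two periodic attractors $Att_{i_1}, Att_{i_2}$, a node set $\hat{V}$, and an integer $N$ — for which all three requirements hold simultaneously: (a) $Dist(Att_{i_1}, Att_{i_2}, \hat{V}) = N$; (b) $Dist(Att_{i_1}, Att_{i_2}, \hat{V}\cup\{v_j\}) = N$ for every $v_j \in V-\hat{V}$; and (c) $Dist(Att_{i_1}, Att_{i_2}, \{v_i, v_j\}) \neq 0$ for some $v_i \in \hat{V}$ and $v_j \in V-\hat{V}$. The phenomenon to exploit is peculiar to periodic attractors: because $Dist$ is a minimum over the time shift $t$, a node can be \emph{invisible in isolation}, i.e. $Dist(Att_{i_1}, Att_{i_2}, \{v_j\}) = 0$ because some shift makes its one-node series match, and yet contribute a positive amount inside a pair, because two nodes may prefer incompatible shifts. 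Note moreover that $Dist$ is super-additive over disjoint node sets (from $\min_t(f(t)+g(t)) \ge \min_t f(t) + \min_t g(t)$, the period window being common to all terms), so once every node outside $\hat{V}$ is invisible in isolation one has $Dist(Att_{i_1}, Att_{i_2}, \hat{V}\cup\{v_j\}) \ge N$ automatically, and (b) reduces to showing the distance does not exceed $N$.

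Concretely, I would take $n = 3$, $V = \{v_1, v_2, v_3\}$, and two period-$2$ attractors whose columns match under \emph{complementary} shifts: let $Att_{i_1}$ have the two states $(0,0,0)$ and $(1,1,1)$, so each of its columns is the pattern $[0,1]$, and let $Att_{i_2}$ have the two states $(0,1,1)$ and $(1,0,0)$, so its $v_1$-column is $[0,1]$ while its $v_2$- and $v_3$-columns are $[1,0]$. Writing $I_t(v_k)$ for the indicator that $Ser(Att_{i_1}, \{v_k\}, 0)$ differs from $Ser(Att_{i_2}, \{v_k\}, t)$, a direct unrolling of the definitions (with $LCM(2,2)=2$) gives $I_0(v_1)=0$, $I_1(v_1)=1$, and $I_0(v_2)=I_0(v_3)=1$, $I_1(v_2)=I_1(v_3)=0$; in particular $Dist(Att_{i_1}, Att_{i_2}, \{v_k\}) = \min(I_0(v_k),I_1(v_k)) = 0$ for every single $k$.

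Now set $\hat{V} = \{v_1, v_2\}$. Then $Dist(Att_{i_1}, Att_{i_2}, \hat{V}) = \min(0+1,\ 1+0) = 1$, so $N = 1$, giving (a). The only node outside $\hat{V}$ is $v_3$, and $Dist(Att_{i_1}, Att_{i_2}, \{v_1,v_2,v_3\}) = \min(0+1+1,\ 1+0+0) = 1 = N$ — the minimizing shift switches from $t=0$ to $t=1$, where $v_2$ and $v_3$ both agree — so (b) holds. Finally $Dist(Att_{i_1}, Att_{i_2}, \{v_1, v_3\}) = \min(0+1,\ 1+0) = 1 \neq 0$ with $v_1 \in \hat{V}$ and $v_3 \in V-\hat{V}$, which is (c). If one wants $|V-\hat{V}|$ to be larger, append further columns carrying the $v_3$-pattern; at shift $t=1$ they all still agree, so (a)--(c) are unaffected. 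This proves the proposition, and the instance also illustrates the motivation given before Algorithm~\ref{al:app}: since every single node here is invisible in isolation, a greedy scheme that enlarges $\hat{V}$ one node at a time (updating a per-POA distance estimate by single-node contributions) would stall, whereas the pair-based greedy still gains $1$ from the pair $\{v_1,v_3\}$.

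No step is genuinely hard. The only labor is the routine verification of the three $Dist$ values directly from the definition of $Ser(\cdot)$ and the periodic $LCM$ window, and the single point that takes a little thought is arranging the column alignment so that (b) and (c) hold together — which the complementary-shift construction above accomplishes with the fewest possible genes.
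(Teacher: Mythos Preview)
Your proof is correct and takes essentially the same approach as the paper: both establish the existence claim by exhibiting an explicit pair of period-$2$ attractors together with a specific $\hat{V}$, and then verifying conditions (a)--(c) directly from the definition of $Dist$. The paper uses a $5$-node instance ($Att_1=[00101,00110]$, $Att_2=[00010,00001]$, $\hat{V}=\{v_1,v_2,v_3\}$) while you use a tighter $3$-node instance built on the same complementary-shift idea; your example is arguably cleaner and your accompanying explanation of \emph{why} the phenomenon arises (single nodes invisible in isolation, super-additivity forcing only the upper bound in (b) to be checked) is more illuminating than the paper's bare verification.
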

\begin{proof}
Let $Att_1=[00101,00110], Att_2=[00010,00001]$, and $\hat{V}=\{v_1, v_2, v_3\}$.
Then we have
$Dist(Att_1, Att_2, \hat{V})=Dist(Att_1, Att_2, \{v_1, v_2, v_3, v_4\})
=Dist(Att_1, Att_2, \{v_1,$
$v_2, v_3, v_5\})=1$.
On the other hand, we have
$Dist(Att_1,$
$Att_2, \{v_3, v_4\})=Dist(Att_1, Att_2, \{v_3, v_5\}) \neq 0.$
\end{proof}

\begin{proposition}
If $(Att_{i_1},  Att_{i_2})$ is a pair of attractors and $\hat{V}$ is a discriminator such that $Dist(Att_{i_1},$ $Att_{i_2}, \hat{V}) = N$,
then for any pairs of nodes $(v_{j_1}, v_{j_2})$ such that
$v_{j_1}\notin \hat{V}, v_{j_2} \notin \hat{V}$, and $x_{T(i_1, i_2, m)}\in s_{T(j_1, j_2, n)}$,
$Dist(Att_{i_1}, Att_{i_2}, \bar{V}) \geq N+D[T(i_1, i_2, m), T(j_1, j_2, n)]$
holds where $\bar{V}=\hat{V}\cup \{v_{j_1}, v_{j_2}\}$.
\label{prop:add2}
\end{proposition}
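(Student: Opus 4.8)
The plan is to reduce the statement to the additive behaviour of $Dist$ in its time‑shift argument, in the same spirit as the proof of Lemma~\ref{lem:clique}. First I would recall the indicator $Dif_{(i_1,i_2)}(\{v_j\},t)$, which is $0$ when $Ser(Att_{i_1},\{v_j\},0)=Ser(Att_{i_2},\{v_j\},t)$ and $1$ otherwise; with it, for \emph{any} node set $\tilde V$ we have $Dist(Att_{i_1},Att_{i_2},\tilde V)=\min_{0\le t\le p-1}\sum_{v_j\in\tilde V}Dif_{(i_1,i_2)}(\{v_j\},t)$, where $p=LCM(p(i_1),p(i_2))$ depends only on the attractor pair and not on $\tilde V$. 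Since $v_{j_1},v_{j_2}\notin\hat{V}$ and $j_1\neq j_2$, for every fixed $t$ the sum over $\bar{V}=\hat{V}\cup\{v_{j_1},v_{j_2}\}$ splits, without double counting, as the $\hat{V}$‑sum plus $Dif_{(i_1,i_2)}(\{v_{j_1}\},t)+Dif_{(i_1,i_2)}(\{v_{j_2}\},t)$.

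The key step is then just superadditivity of the pointwise minimum. Let $t^*\in\{0,\dots,p-1\}$ attain $Dist(Att_{i_1},Att_{i_2},\bar{V})$. Then
$$Dist(Att_{i_1},Att_{i_2},\bar{V})=\sum_{v_j\in\hat{V}}Dif_{(i_1,i_2)}(\{v_j\},t^*)+\Big(Dif_{(i_1,i_2)}(\{v_{j_1}\},t^*)+Dif_{(i_1,i_2)}(\{v_{j_2}\},t^*)\Big).$$
The first summand is $\ge\min_{t}\sum_{v_j\in\hat{V}}Dif_{(i_1,i_2)}(\{v_j\},t)=Dist(Att_{i_1},Att_{i_2},\hat{V})=N$, and the second is $\ge\min_{t}\big(Dif_{(i_1,i_2)}(\{v_{j_1}\},t)+Dif_{(i_1,i_2)}(\{v_{j_2}\},t)\big)=Dist(Att_{i_1},Att_{i_2},\{v_{j_1},v_{j_2}\})=D[T(i_1,i_2,m),T(j_1,j_2,n)]$ by definition of $D$. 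Adding the two bounds yields the claimed inequality. The hypothesis $x_{T(i_1,i_2,m)}\in s_{T(j_1,j_2,n)}$ is used only to know that $D[T(i_1,i_2,m),T(j_1,j_2,n)]>0$, so that adjoining $\{v_{j_1},v_{j_2}\}$ strictly increases the guaranteed distance; the inequality itself holds regardless.

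I do not expect a real obstacle here: once the problem is phrased via $Dif_{(i_1,i_2)}(\cdot,t)$ it is the elementary fact $\min_t(f(t)+g(t))\ge\min_t f(t)+\min_t g(t)$. The only point that genuinely needs checking — and the natural place a reader would worry — is the bookkeeping of the time index: one must confirm that the minimizations defining $Dist$ for $\hat{V}$, for $\{v_{j_1},v_{j_2}\}$, and for $\bar{V}$ all range over the same set $\{0,\dots,p-1\}$ with $p=LCM(p(i_1),p(i_2))$, which is true because this period is a function of the attractor pair alone, and that comparing the infinite sequences $Ser(\cdot,\cdot,0)$ and $Ser(\cdot,\cdot,t)$ is equivalent to comparing them over one full period $p$ because both are purely periodic. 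With that observation in place the three displayed inequalities complete the argument.
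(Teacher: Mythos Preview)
Your proof is correct and follows essentially the same approach as the paper's: both reduce to the superadditivity $\min_t(\beta_1(t)+\alpha(t))\geq\min_t\beta_1(t)+\min_t\alpha(t)$ after writing $Dist$ as a minimum over $t$ of a sum of per-node indicators and splitting the $\bar V$-sum into the $\hat V$-part and the $\{v_{j_1},v_{j_2}\}$-part. Your version is somewhat more explicit (you name the minimizer $t^*$ and bound each summand separately, and you spell out why the time range is the same for all three minimizations), but the argument is the same.
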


\begin{proof}
Let $Dif_{(i_1, i_2)}(\hat{V}, t)$ has the form of 
$
Dif_{(i_1, i_2)}(\hat{V}, t)=\underbrace{00\cdots0}_{\hat{V}_1(t)}\underbrace{11\cdots1}_{\hat{V}_2(t)}.  
$
and\\
$
\beta_1(t)=\displaystyle\sum_{v_i\in \hat{V}} {I\{Ser(Att_{i_1},\{v_i\}, 0), Ser(Att_{i_2}, \{v_i\}, t)\}}.
$
Then, it is obvious $\beta_1(t)=|\hat{V}_2(t)|$. In the proof of Lemma~\ref{lem:clique},
we claimed a relationship between $\beta_1(t)$ and $Dist(Att_{i_1}, Att_{i_2}, \hat{V})$ that
$
Dist(Att_{i_1}, Att_{i_2},\hat{V})=\displaystyle\min_t{\beta_1(t)}.
$
Let $\alpha=D[T(i_1, i_2, m), T(j_1, j_2, n)]$.
Since $x_{T(i_1, i_2, m)}\in s_{T(j_1, j_2, n)}$, $\alpha$ is either 1 or 2.
This means
\begin{eqnarray*}
\alpha = \min_{t} \alpha(t) ~=~\min_{t} \hspace{5cm}\\
\left\{ \sum_{v_i\in \{v_{j_1}, v_{j_2}\}} {I\{Ser(Att_{i_1},\{v_i\}, 0), Ser(Att_{i_2}, \{v_i\}, t)\}} \right\}.
\end{eqnarray*}
Similarly, consider
$
\beta_2(t)=\displaystyle\sum_{v_i\in \bar{V}} {I\{Ser(Att_{i_1},\{v_i\}, 0), Ser(Att_{i_2}, \{v_i\}, t)\}}
$
then
$
Dist(Att_{i_1}, Att_{i_2}, \bar{V})
$
$
=\displaystyle\min_{t} \beta_2(t).
$
Obviously, we have
$\displaystyle\min_{t}\beta_2(t)\geq \min_{t}\beta_1(t)+\alpha$, and
the equality is achieved when both $\beta_1(t)$ and $\alpha(t)$ are minimized
at the same $t$, which implies
$
Dist(Att_{i_1}, Att_{i_2}, \bar{V}) \geq N+D[T(i_1, i_2, m), T(j_1, j_2, n)].
$
\end{proof}

\begin{example}
Three attractors are given by $Att_1=[010101,011011,000101,$ $111011,110101,101011]$, $Att_2=[010011,011100,000011,111100,110011,$ $101100]$ and 
$Att_3=[010001,101110,110001,001110]$.
A detailed process
of Algorithm \ref{al:app} is shown in
Table \ref{tab:n7} where ${\bf r}=[r_1, r_2,r_3]$.
After the 3rd step, all nodes have been added to $\hat{V}$ and
${\bf r}=[3,3,3]$ holds,
and then it returns $V$ since $U=\emptyset$.

\begin{table}
\caption{Example of execution of Algorithm 2.}
\begin{center}
\resizebox{0.9\textwidth}{!}{
\begin{tabular}{|c |c c c c| }
\hline
{\bf step}& $U$ & $S$ & $\hat{V}$ &
${\bf r}=[r_1,r_2,r_3]$  \\
\hline
0 & $\{x_1, x_2, x_3\}$ & $\{s_1, \ldots, s_{15}\}$&$\emptyset$& ${[0,0,0]}$\\
1 & $\{x_1, x_2, x_3\}$ & $\{s_6, s_8, s_9, s_{11}, s_{12}, s_{15}\}$&$\{v_1, v_4\}$& ${[1,1,1]}$  \\
2 & $\{x_1,x_3\}$ & $\{s_{11}\}$&$\{v_1, v_4, v_2, v_6\}$& ${[2,3,2]}$  \\
3 & $\emptyset$ & $\emptyset$&$\{v_1, v_4, v_2, v_6, v_3, v_4\}$& ${[3,3,3]}$  \\
\hline
\end{tabular}
}
\end{center}
\label{tab:n7}
\end{table}
\end{example}    
 Besides,
there is a difficulty in analyzing the approximation factor
in general.
Therefore, we add a condition 
$Dist(Att_{i_1}, Att_{i_2}, \hat{V}_{j-1}) +Dist(Att_{i_1},$ $Att_{i_2}, \hat{V}^*-\hat{V}_{j-1})\geq 2K+1$
to obtain a guaranteed approximation factor as below.
Even though it is difficult to test whether or not the condition is satisfied
before running the approximation algorithm,
it seems from numerical experiments
that this condition is satisfied in most cases.

\begin{theorem}
Let $\hat{V}_{j}$ denote the discriminator set obtained after the $j$-th
iteration of Algorithm \ref{al:app}
and $\hat{V}^*$ be an optimal solution.
Suppose that
$
Dist(Att_{i_1}, Att_{i_2}, \hat{V}_{j-1}) +Dist(Att_{i_1}, Att_{i_2}, \hat{V}^*-\hat{V}_{j-1}) \geq 2K+1
$
is satisfied for each $j$.
Then, Algorithm \ref{al:app} is an $\ln(M(2K+1))+1$ factor
polynomial time approximation algorithm for {\bf MinDattNN} ($M=\binom{m}{2}$).
\label{thm:apsolve}
\end{theorem}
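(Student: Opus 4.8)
The plan is to read Algorithm~\ref{al:app} as a greedy covering procedure and to push through the classical set-cover charging argument, using the extra hypothesis to repair the one place where the pairwise bookkeeping and the true distances disagree. First I would reformulate {\bf MinDattNN} as a covering problem: give each POA $x_l=(Att_{i_1},Att_{i_2})$ a demand of $2K+1$ ``units of discrimination'', let a candidate pair $s_{T(j_1,j_2,n)}$ supply $D[T(i_1,i_2,m),T(j_1,j_2,n)]\in\{1,2\}$ units toward $x_l$ whenever $x_l\in s_{T(j_1,j_2,n)}$, so the total initial demand is $\Phi_0=M(2K+1)$. With $\hat{V}_j$ the set after iteration $j$, the relevant potential is the residual demand $\Phi_j=\sum_{l:\,x_l\in U_j}\big((2K+1)-r_l^{(j)}\big)$, and the while-loop stops exactly when $\Phi=0$; iterating Proposition~\ref{prop:add2} shows that each selected pair genuinely drives $\Phi$ down by the sum of its $D$-entries over the POAs it hits.

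The technical core is the inequality: for every POA $x_l=(Att_{i_1},Att_{i_2})$ and every node set $W$ with $|W|\ge 2$,
\[
\sum_{\{v_a,v_b\}\subseteq W} D[T(i_1,i_2,m),T(a,b,n)] \;\ge\; Dist(Att_{i_1},Att_{i_2},W).
\]
I would derive this from Lemma~\ref{lem:clique}: with $\gamma=|MC(G_D(T(i_1,i_2,m),W))|$ we have $Dist=|W|-\gamma$, so the complement graph $\overline{G_D}$ has independence number $\gamma=|W|-Dist$; since deleting one endpoint of each edge of $\overline{G_D}$ leaves an independent set, $\overline{G_D}$ has at least $Dist$ edges, and each such edge is a node pair with $D\ge 1$. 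The degenerate configurations ($|W|\le 1$, and the ``all nodes isolated'' case treated by lines~7--9 of Algorithm~\ref{al:exp} with the convention $\gamma=0$) are checked directly. Feeding $W=\hat{V}^{*}-\hat{V}_{j-1}$ into this inequality together with the hypothesis $Dist(Att_{i_1},Att_{i_2},\hat{V}_{j-1})+Dist(Att_{i_1},Att_{i_2},\hat{V}^{*}-\hat{V}_{j-1})\ge 2K+1$ shows that the pairs lying inside $\hat{V}^{*}-\hat{V}_{j-1}$ collectively supply the residual demand $\Phi_{j-1}$ that the algorithm still records, so that $\hat{V}^{*}-\hat{V}_{j-1}$ remains a ``feasible completion'' in the accounting, not merely in truth.

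From there the argument is the standard greedy/harmonic one: Algorithm~\ref{al:app} picks the available pair maximizing $|s_{T(j_1,j_2,n)}\cap U|$, the residual-optimal pairs already cover $\Phi_{j-1}$, so a counting argument over them forces the greedy pair to remove a proportional share of $\Phi_{j-1}$; charging each covered unit of demand to the pair that covers it and telescoping the resulting sum from $M(2K+1)$ down below $1$ yields a bound of the form $\ln(M(2K+1))+1$ against $|\hat{V}^{*}|$ (here the monotonicity and superadditivity of $Dist$ over disjoint node sets are what let one account in terms of the nodes of $\hat{V}^{*}$ rather than all $\binom{|\hat{V}^{*}|}{2}$ of its pairs), and polynomiality is immediate since $D$ is built once in $O(m^{2}n^{2}p)$ time as in Theorem~\ref{thm:exact} and the loop runs at most $\binom{n}{2}$ times with $O(\mathrm{poly}(m,n))$ work per iteration. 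I expect the main obstacle to be neither the covering reduction nor the harmonic sum, but the two points where the periodic structure bites: proving the structural inequality above (the clique-to-complement-independence step together with its degenerate cases), and reconciling the quantity $r_l$ that Algorithm~\ref{al:app} actually maintains --- which need not equal $Dist(Att_{i_1},Att_{i_2},\hat{V}_{j-1})$, both because Proposition~\ref{prop:add2} is only an inequality and because a later pair may reuse an already-chosen node --- with the true distances appearing in the hypothesis; it is precisely to bridge this discrepancy, so that ``$\hat{V}^{*}-\hat{V}_{j-1}$ still covers the recorded residual'' holds at every iteration, that the added condition in the statement is imposed.
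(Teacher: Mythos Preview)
Your overall architecture---reformulate as multi-covering, track a residual potential, invoke the hypothesis to show that the nodes in $\hat{V}^{*}-\hat{V}_{j-1}$ still suffice to cover what remains, then run the harmonic charging---is exactly the shape of the paper's argument. Where you diverge is in the technical device used for the step ``the pairs inside $W:=\hat{V}^{*}-\hat{V}_{j-1}$ collectively supply the residual.''

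The paper does \emph{not} use your complement-graph inequality. Instead it fixes an arbitrary partition of $W$ into $|W|/2$ \emph{disjoint} node pairs $s_1^{*},\dots,s_q^{*}$ (each of which is still a legal candidate for the greedy step, since line~6 only removes pairs touching already-chosen nodes), and argues two things: (i) greedy's pick beats each $s_k^{*}$, hence beats their average, giving $|s_{t(j)}\cap U_j|\ge \frac{2}{|W|}\sum_k|s_k^{*}\cap U_j|$; and (ii) a ``reverse'' version of inequality~(\ref{ineq:G2}) applied to an imaginary run over the $s_k^{*}$ yields $\sum_k|s_k^{*}\cap U_j|\ge \tfrac12\sum_{i_1<i_2}Dist(Att_{i_1},Att_{i_2},W)$. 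Chaining (i), (ii), the hypothesis, and~(\ref{ineq:G2}) gives $\tfrac{1}{|s_{t(j)}\cap U_j|}\le \tfrac{|\hat V^{*}|}{M(2K+1)-2\sum_{k<j}|s_{t(k)}\cap U_k|}$, and the harmonic sum follows.

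Your structural inequality $\sum_{\{a,b\}\subseteq W}D[\,\cdot\,,T(a,b,n)]\ge Dist(\,\cdot\,,W)$ is correct (the complement/independence argument via Lemma~\ref{lem:clique} works), but as a replacement for step~(ii) it is too weak: the left side has $\binom{|W|}{2}$ terms, so the averaging in step~(i) would divide by $\binom{|W|}{2}$ rather than $|W|/2$, producing a ratio against $|\hat V^{*}|^{2}$, not $|\hat V^{*}|$. You flag this yourself and propose to repair it with ``superadditivity of $Dist$ over disjoint node sets,'' but note the direction: Proposition~\ref{prop:add2} iterated over a partition gives $\sum_i D[\,\cdot\,,P_i]\le Dist(\,\cdot\,,W)$, an \emph{upper} bound on the disjoint-pair sum, whereas for the averaging you need a \emph{lower} bound. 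So superadditivity does not close the gap, and once you switch to disjoint pairs you are back to needing the paper's inequality~(\ref{ineq:G5}) rather than your complement-graph lemma---which then plays no role.

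In short: your plan is right in spirit and identifies the crux (why the hypothesis is needed), but the complement-graph inequality is a detour; the working route is the paper's disjoint-pair partition of $\hat V^{*}-\hat V_{j-1}$ combined with (\ref{ineq:G4})--(\ref{ineq:G5}).
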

\begin{proof}
Firstly, we consider the approximation ratio when the algorithm terminates when $S\neq \emptyset$.
Let $t(k)$ be the index such that a pair of nodes $s_{t(k)}$ is chosen at the $k$-th iteration.
Let $\hat{V}_{j-1}$ denote the discriminator set after the $(j-1)$-th iteration, $U_j$ is $U$ at the $j-$th iteration.
We assume without loss of generality that $\hat{V}_{j-1}=\{v_1, v_2, \cdots, v_{2j-2}\}$ holds.
In each iteration, a pair of nodes is added, thus
the distance
between a POA can be increased by at most 2, thus we have the following inequality:
\begin{eqnarray}
\label{ineq:G2}
&\sum_{i_1< i_2}{Dist(Att_{i_1},Att_{i_2}, \hat{V}_{j-1})} \leq \sum_{i_1<i_2}(0+2\cdots+2)&\\
&=2 \sum_{k=1}^{j-1} \sum_{i_1<i_2}
{\bf 1}_{s_{t(j)}\cap U_k}(x_{T(i_1, i_2, m)})
= 2 \sum_{k=1}^{j-1}|s_{t(k)}\cap U_k|,& \nonumber
\end{eqnarray}
where ${\bf 1}_A(x)=1$ if $x \in A$, and 0 otherwise.\\
By assumption, we have
$Dist(Att_{i_1},Att_{i_2},\hat{V}_{j-1})+Dist(Att_{i_1}, Att_{i_2}, \hat{V}^*-\hat{V}_{j-1}) \geq 2K+1$.
By taking the summation over all pairs of attractors, we have
\begin{eqnarray}
\label{ineq:G3}
& & \sum_{i_1<i_2}Dist(Att_{i_1},Att_{i_2},\hat{V}^*-\hat{V}_{j-1})\\
& \geq & (2K+1)M-\sum_{i_1<i_2}{Dist(Att_{i_1},Att_{i_2},\hat{V}_{j-1})}\nonumber\\
& \geq & (2K+1)M-2\displaystyle\sum_{k=1}^{j-1}|s_{t(k)}\cap U_k|. \nonumber
\end{eqnarray}
At the $j$-th iteration, we need to choose a pair of nodes that 
discriminates the largest number of POAs in $U_j$.
Since all nodes in $\hat{V}^*-\hat{V}_{j-1}=\{v_1^*,v_2^*,\cdots\}$ are candidates,
every two nodes form a pair in order, which
are denoted as $s^*_1, s^*_2, \cdots, s^*_{|\hat{V}^*-\hat{V}_{j-1}|/2}$.
Then we have
\begin{eqnarray}
\label{ineq:G4}
\begin{array}{l}
\displaystyle\sum_{k=1}^{|\hat{V}^*-\hat{V}_{j-1}|/2} |s_k^*\cap U_j| \leq \frac{|\hat{V}^*-\hat{V}_{j-1}|}{2} \cdot |s_{t(j)}\cap U_j|,
\end{array}
\end{eqnarray}
Furthermore, if we consider choosing nodes from $\hat{V}^*-\hat{V}_{j-1}$
to discriminate POAs in $U_{j-1}$,
then after $|\hat{V}^*-\hat{V}_{j-1}|/2$ iterations, all nodes will
be chosen.
Therefore, from Ineq. (\ref{ineq:G2}), we have
\begin{eqnarray}
\label{ineq:G5}
&  & \sum_{k=1}^{|\hat{V}^*-\hat{V}_{j-1}|/2} |s_k^*\cap U_j| 
\geq \displaystyle\sum_{k=1}^{|\hat{V}^*-\hat{V}_{j-1}|/2} |s_k^*\cap U^k_j| \\
& \geq & \frac{1}{2} \displaystyle\sum_{i_1<i_2}Dist(Att_{i_1},Att_{i_2}, \hat{V}^*-\hat{V}_{j-1}),\nonumber
\end{eqnarray}
where $U_j^k$ is $U_j$ at the $k$-th iteration. Putting
{
(\ref{ineq:G3})-(\ref{ineq:G5})
}
together, we have
\begin{align*}
\frac{1}{|s_{t(j)}\cap U_j|} & \leq \frac{ |\hat{V}^*-\hat{V}_{j-1}|}{2 \displaystyle\sum_{k=1}^{|\hat{V}^*-\hat{V}_{j-1}|/2} |s_k^*\cap U_j|} & \text{by (3)}\\
&\leq \frac{|\hat{V}^*-\hat{V}_{j-1}|}{\displaystyle\sum_{i_1<i_2}Dist(Att_{i_1},Att_{i_2}, \hat{V}^*-\hat{V}_{j-1})} &\text{by (4)}\\
&\leq \frac{|\hat{V}^*-\hat{V}_{j-1}|}{M(2K+1)-2 \displaystyle\sum_{k=1}^{j-1}{|s_{t(k)}\cap U_k |}} &  \text{by (2)}\\
&\leq \frac{|\hat{V}^*|}{M(2K+1)-2 \displaystyle\sum_{k=1}^{j-1}{|s_{t(k)}\cap U_k |}}. &
\end{align*}
Finally, 
we define $P(x_l, j)$ as the average price of 
to each $x_l$ at the $j$-th iteration, for $x_l\in U_j$.
Then we have
$
P(x_l, j)=\left \{
\begin{array}{cl}
\frac{2}{|s_{t(j)}\cap U_j|}, & x_l \in s_{t(j)}\cap U_j,\\
0, & {\rm otherwise}.
\end{array}
\right.
$\\
After termination of Algorithm \ref{al:app},
the total number of nodes in the discriminator should be $|\hat{V}|$,
which means
$
|\hat{V}|=\displaystyle\sum_{j=1}^{|\hat{V}|/2}{ \sum_{l=1}^{M} P(x_l, j)}
$.
Thus we have
\begin{eqnarray*}
|\hat{V}|&=&\sum_{j=1}^{|\hat{V}|/2}{ \sum_{l=1}^{M} P(x_l, j)}\\
& \leq & \sum_{j=1}^{|\hat{V}|/2}(2|U_j\cap s_{t(j)}|)
\cdot \frac{|\hat{V}^*|}{M(2K+1)-2 \displaystyle\sum_{k=1}^{j-1}|s_{t(k)}\cap U_k|}\\
& \leq & |\hat{V}^*|\left(1 + \frac{1}{2}+\ldots + \frac{1}{M(2K+1)}\right),
\end{eqnarray*}
where
$i_s=2\displaystyle\sum_{k=1}^{j-1}|s_{t(k)}\cap U_k|$ and
$i_e=2\displaystyle\sum_{k=1}^{j}|s_{t(k)}\cap U_k|-1$.
Therefore, we have\\
$
\frac{|\hat{V}|}{|\hat{V}^*|} \le 
\left(1 + \frac{1}{2}+\ldots + \frac{1}{M(2K+1)}\right)
\leq \ln(M(2K+1))+1.
$
On the other hand, if the algorithm terminates when $S=\emptyset$,
it will reduce the approximation factor.
Therefore, $\ln{(M(2K+1))}+1$ gives an upper bound of
the approximation ratio.\\
Next, we analyze the time complexity.
Let $p=LCM\{p(1), p(2), \ldots, p(m)\}$.
{Calculation of matrix $D$ should cost $O(m^2n^2p)$ time.}
It is also clear that
the inner {\bf While} loop takes $O(m^2 n^2 \log{n} )$ time.
Therefore, the theorem holds.
\end{proof}

\section{Discrimination of Singleton Attractors with Noisy Nodes}

As we mentioned before, discrimination of singleton attractors with noisy nodes is a special case of the discrimination problem proposed in the previous section.
In this case, each attractor $Att_i$ is a binary vector and
the distance between a POA $(Att_{i_1}, Att_{i_2})$ is degenerated to $H(Att_{i_1}, Att_{i_2})$,
where $H(x, y)$ is the Hamming distance between binary vectors $x$ and $y$.
Therefore, we put all attractors in a binary $m\times n$ matrix $Att$ in which each row represents a singleton attractor. Moreover, for a matrix $A$, let $A[i,-]$ (resp., $A[-,j]$) denotes the $i$-th row (resp., $j$-th column).
Similarly, for a matrix $A$ and a set of column (resp., row) indices $J=\{j_1,\ldots,j_k\}$,
$A[i,J]$ (resp., $A[J,j]$) denotes the submatrix of $A[i,-]$ (resp., $A[-,j]$)
consisting of the $j_1,\ldots,j_k$-th columns (resp., rows).
\cite{Cheng2017b} considered the clean (i.e., without noise) version of
the problem,
The task was to find the minimum set of column indices $J$ such that
$Att[i_1, J]\neq Att[i_2, J]$ holds for all $i_1,i_2$ with
$1 \leq i_1 < i_2 \leq m$.
In the noisy case, it is hypothesized that noisy nodes vary in
different attractors, thus at most $2K$ nodes are not reliable for a POA, 
thus a POA is discriminated only if $H(Att[i_1, -], Att[i_2, -])>2K$.
Hence, we define {\bf MinDSattNN} as follows.
%
\begin{definition}
(Minimum Discriminator for Singleton Attractors with Noisy Nodes 
{\bf [MinDSattNN]})\\
\noindent
{\bf Input}: A set of singleton attractors represented by an
$m\times n$ binary matrix $Att$ and an integer $K$ denoting
the maximum number of noisy nodes per attractor. \\
{\bf Output}: A minimum cardinality set $J$ of columns such that
$H(Att[i_1, J],$ $Att[i_2, J]) \geq 2K+1$
holds for all $i_1, i_2$ with $1 \leq i_1< i_2 \leq m$.
\end{definition}

\subsection{Exact Algorithm for {\bf MinDSattNN}}
To solve {\bf MinDSattNN}, we develop an integer programming (IP)-based
exact method.
To this end, we construct a matrix $C_{att}$
of size $\binom{m}{2} \times n$ by
comparing every two rows of $Att$:
$$
 C_{att}[T(i_1, i_2, m), j]=\left\{
 \begin{array}{ll}
 0, & {\rm if} \ Att[i_1, j]= Att[i_2, j],\\
 1, & {\rm if} \ \mbox{otherwise}.
 \end{array}
 \right.
 $$
In {\bf MinDatt}, matrix $D$ is constructed from calculating distance between any POA by observing a pair of nodes, whereas in the singleton case observing only a node is needed.
Let $M$ denote $\binom{m}{2}$.
Let ${\bf y} = [y_1, y_2, \ldots, y_n]$ be a vector in which
$y_j$ takes 1 ($j \in J$) or 0 ($j \notin J$).
Then {\bf MinDSattNN} can be formulated
as a typical IP problem
$$ \min \  {\bf 1\cdot y}^T $$
subject to
$$
\left\{
\begin{array}{l}
C_{att}[i, -] \cdot {\bf y}^T \geq 2K+1 \quad (i=1, 2, \ldots, M),\\
y_j \in \{0,1\}\quad (j=1, 2, \ldots, n),
\end{array}
\right.
$$
where ${\bf 1}=\underbrace{[1,1, \ldots, 1]}_{n}$.
Accordingly, we see that
{\bf MinDSattNN} can be transformed into an integer programming problem
with $n$ binary variables and $O(m^2)$ constraints.
It should be noted that
existing IP solvers (e.g., {\it intlinprog} in MATLAB)
take exponential time in the worst case
and thus this IP-based method takes exponential time in the worst case.
However, it is reasonable because both {\bf MinDattNN} and {\bf MinDSattNN}
include the problem of discrimination of singleton attractors
({\bf MinDiscSatt}) \cite{Cheng2017b}, which is known to be NP-hard, as a special case
and thus are NP-hard.
The following is an illustrative example of the IP process.
\begin{example}
Let $Att$ be given by
$$
Att = 
\left(
\begin{array}{llllllll}
1 & 0 & 0 & 0 & 0 & 0 & 0 & 1\\
1 & 1 & 1 & 0 & 1 & 0 & 0 & 1\\
1 & 0 & 0 & 0 & 1 & 1 & 1 & 0
\end{array}
\right),
$$
which means that there exist three singleton attractors
$Att[1,-] = [1,0,0,0,$ $0,0,0,1]$,
$Att[2,-] = [1,1,1,0,1,0,0,1]$,
$Att[3,-] = [1,0,0,0,1,1,1,0]$.
Then we would have
$$
C_{att} = 
\left(
\begin{array}{llllllll}
0 & 1 & 1 & 0 & 1 & 0 & 0 & 0\\
0 & 0 & 0 & 0 & 1 & 1 & 1 & 1\\
0 & 1 & 1 & 0 & 0 & 1 & 1 & 1
\end{array}
\right),
$$
and all parameters into {\it intlinprog} (in MATLAB)
would be $f=[1,1,1,1,1,1,$ $1,1]$, $intcon=[1,2,3,4,5,6,7,8]$,$lb=[0,0,0,0,0,0,0,0]$, $ub=[1,1,1,1,1,$ $1,1,1]$,
then we would have the optimal object value of 5 and
$y=[0,1,1,0,1,$ $1,0,1]$, which means $J=\{2,3,5,6,8\}$.
\label{example:ip}

\end{example}

\subsection{Approximation Algorithm for {\bf MinDSattNN}}

{\bf MinDSattNN} is a special case of
the set multi-cover problem \cite{rajagopalan1993primal}, which is NP-hard.
Therefore,
in order to balance the trade-off between the size of a target set $J$ and the overall time complexity,
we design a simple greedy algorithm, Algorithm~\ref{al:aps},
based on \cite{rajagopalan1993primal}.
As shown in Theorem~\ref{thm:singleton},
it has a guaranteed approximation ratio $\displaystyle \ln{\left(M(2K+1)\right)+1}$.
Notice that, usually, $m\ll n$ and $K \ll n$ and thus
the ratio is acceptable as
the computational time can be reduced significantly.
Similarly, $x_{T(i_1,i_2, m)}$ denotes a POA and let $s_j=\{x_{T(i_1, i_2, m)}| Att[i_1, j]\neq Att[i_2, j]\}$ denote the POAs that can be discriminated by $v_j$. For example, $x_1=(Att_1, Att_2)$ and $s_2=\{x_1, x_3\}$ in Example~\ref{example:ip}. 
An example of this algorithm is given in Example~\ref{example:aps}.

{\centering
\begin{minipage}{0.9\textwidth}
 \begin{algorithm}[H]
    \caption{Approximation algorithm for {\bf MinDSattNN}}\label{APsin}
    \hspace*{\algorithmicindent} \textbf{Input:} $U=\{x_l, 1\leq l \leq M\}$, $S=\{s_1, s_2, \ldots, s_n\}$, integer $K$ \\
    \hspace*{\algorithmicindent} \textbf{Output:} set of nodes $J$
    \begin{algorithmic}[1]
    \State $J\gets \emptyset, r_l=0$ for $1\leq l \leq M$
    \While{$U\neq \emptyset$}
    \State Find { $s_j\in S$} with maximum $|s_j\cap U|$,
    $J\gets J\cup \{j\}, S\gets S-\{s_j\}$;
    \For{all $x_l \in s_j$ and $r_{l}<2K+1$}
    $r_l\gets r_l+1$
    \If{$r_l=2K+1$} $U\leftarrow U-\{x_l\}$ \EndIf
    \EndFor
    \EndWhile
    \end{algorithmic}
    \label{al:aps}
    \end{algorithm}
    \end{minipage}
    }
    
 \begin{example}
    Attractors are the same as in Example \ref{example:ip},
and detailed execution steps are shown in Table ~\ref{tab:n12}. 
In this case, we have the final target set
equals to $\{v_2, v_3, v_5, v_6, v_7\}$.
\begin{table}[ht]
\caption{Example of execution of Algorithm~\ref{al:aps}.}
\begin{center}
\resizebox{0.9\textwidth}{!}{
\begin{tabular}{|c |c c c c| }
\hline
{\bf step}& $U$ & $S$ & $J$ & ${\bf r}=[r_1,r_2,r_3]$  \\
\hline
0 & $\{x_1, x_2, x_3\}$ & $\{s_1, \ldots, s_{8}\}$&$\emptyset$&
${[0,0,0]}$  \\
1 & $\{x_1, x_2, x_3\}$ & $\{s_1, s_3, \ldots, s_{8}\}$&$\{2\}$& ${[1,0,1]}$  \\
2 & $\{x_1,x_2, x_3\}$ & $\{s_1, s_4, \ldots, s_{8}\}$&$\{2,3\}$& ${[2,0,2]}$  \\
3 & $\{x_2, x_3\}$ & $\{s_1, s_4, s_6, s_7, s_{8}\}$&$\{2,3,5\}$& ${[3,1,2]}$  \\
4 & $\{x_2\}$ & $\{s_1, s_4, s_7, s_{8}\}$&$\{2,3,5,6\}$& ${[3,2,3]}$  \\
5 & $\emptyset$ & $\{s_1, s_4, s_{8}\}$&$\{2,3,5,6,7\}$& ${[3,3,3]}$  \\
\hline
\end{tabular}
}
\end{center}
\label{tab:n12}
\end{table}
\label{example:aps}
\end{example}

    \begin{theorem}
Algorithm~\ref{al:aps} is an $\ln(M(2K+1))+1$ factor
polynomial-time
approximation algorithm for {\bf MinDSattNN}.
\label{thm:singleton}
\end{theorem}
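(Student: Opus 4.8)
The plan is to recognize {\bf MinDSattNN} as an instance of the set multi-cover problem and run the standard greedy-with-pricing analysis, noting that Algorithm~\ref{al:aps} is exactly the specialization of Algorithm~\ref{al:app} in which a single column, rather than a pair, is added per iteration. Concretely, I would take the universe to be $U=\{x_1,\dots,x_M\}$ with $M=\binom m2$, the candidate sets to be $s_1,\dots,s_n$, and assign to every $x_{T(i_1,i_2,m)}$ a coverage demand of $2K+1$. Since $H(Att[i_1,J],Att[i_2,J])=\sum_{j\in J}C_{att}[T(i_1,i_2,m),j]$ and each column contributes $0$ or $1$, a column set $J$ is feasible for {\bf MinDSattNN} precisely when it covers every element of $U$ at least $2K+1$ times, i.e. this is a genuine (not multiset) set multi-cover, so the framework of \cite{rajagopalan1993primal} applies. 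I would assume throughout that $H(Att[i_1,-],Att[i_2,-])\ge 2K+1$ for all pairs, which is necessary for any solution to exist; then $J=\{1,\dots,n\}$ is feasible and the {\bf While} loop terminates.

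First I would dispense with the technical hypothesis that appears in Theorem~\ref{thm:apsolve}, namely $Dist(Att_{i_1},Att_{i_2},\hat V_{j-1})+Dist(Att_{i_1},Att_{i_2},\hat V^*-\hat V_{j-1})\ge 2K+1$: in the singleton case it is automatic. Writing $J_{j-1}$ for the set $J$ after $j-1$ iterations and $J^*$ for an optimal solution, the column sets $J_{j-1}$ and $J^*\setminus J_{j-1}$ are disjoint, so Hamming distance decomposes exactly as $H(Att[i_1,J_{j-1}],Att[i_2,J_{j-1}])+H(Att[i_1,J^*\setminus J_{j-1}],Att[i_2,J^*\setminus J_{j-1}])=H(Att[i_1,J_{j-1}\cup J^*],Att[i_2,J_{j-1}\cup J^*])\ge H(Att[i_1,J^*],Att[i_2,J^*])\ge 2K+1$. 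Then I would run the same charging argument as in the proof of Theorem~\ref{thm:apsolve}, but with every ``pair of nodes'' replaced by a single node and every factor of $2$ deleted: view each $x_l$ as carrying $2K+1$ tokens, for a total of $M(2K+1)$ tokens, and note that when column $t(j)$ is chosen at iteration $j$ it deletes one token from each element of $s_{t(j)}\cap U_j$ at unit cost, hence pays $1/|s_{t(j)}\cap U_j|$ per deleted token.

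The key estimate is a lower bound on $|s_{t(j)}\cap U_j|$. Let $q_l=2K+1-r_l$ be the residual demand of $x_l$ at iteration $j$ and $n_j=\sum_{x_l\in U_j}q_l$ the residual total demand, so $n_1=M(2K+1)$. Since $r_l=\sum_{j''\in J_{j-1}}C_{att}[l,j'']$ and $\sum_{j'\in J^*}C_{att}[l,j']\ge 2K+1$, subtracting gives $\sum_{j'\in J^*\setminus J_{j-1}}C_{att}[l,j']\ge q_l$; summing over $x_l\in U_j$ yields $\sum_{j'\in J^*\setminus J_{j-1}}|s_{j'}\cap U_j|\ge n_j$, and since all of $J^*\setminus J_{j-1}$ is still available in $S$, averaging produces a column (hence the greedy choice) with $|s_{t(j)}\cap U_j|\ge n_j/|J^*|$. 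Thus the per-token price at iteration $j$ is at most $|J^*|/n_j$. Ordering the $M(2K+1)$ tokens by reverse time of deletion, the $i$-th token is deleted at an iteration $j$ with $n_j\ge i$, so its price is at most $|J^*|/i$, and summing gives $|J|=\sum_j 1=\sum_{\text{tokens}}(\text{price})\le|J^*|\sum_{i=1}^{M(2K+1)}\tfrac1i\le|J^*|\bigl(\ln(M(2K+1))+1\bigr)$, the claimed ratio. For the running time: each iteration removes one set from $S$, so there are at most $n$ of them; constructing $C_{att}$ costs $O(m^2n)$ and recomputing all $|s_j\cap U|$ costs $O(m^2n)$ per iteration, for $O(m^2n^2)$ overall.

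The step needing the most care is the residual-feasibility/averaging inequality above — precisely that the greedily chosen column always covers at least a $1/|J^*|$ fraction of the residual demand. Unlike the general {\bf MinDattNN} setting there is no analogue of Proposition~\ref{prop:nofeasible} to worry about, since Hamming distance decomposes coordinatewise and the single-node greedy never gets stuck; so once the coordinatewise decomposition is in hand, the remainder is a routine replay of the proof of Theorem~\ref{thm:apsolve}, and the bound can alternatively be quoted directly from the set multi-cover greedy analysis of \cite{rajagopalan1993primal}.
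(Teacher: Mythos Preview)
Your proposal is correct and follows essentially the same route as the paper: both arguments recognize the problem as set multi-cover, use the averaging/pigeonhole step on $J^*\setminus J_{j-1}$ to lower-bound the greedy progress, and then sum to the harmonic bound. Your token/residual-demand bookkeeping is exactly the paper's multi-set formulation (their $U'$ with each $x_l$ repeated $2K+1$ times is your set of tokens, and their key inequality~(A.1) is your $|s_{t(j)}\cap U_j|\ge n_j/|J^*|$); your additional remark that the extra hypothesis of Theorem~\ref{thm:apsolve} holds automatically here by coordinatewise additivity of Hamming distance is a nice clarification that the paper does not need to make since it argues directly in the singleton setting.
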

\begin{proof}
Let $t(j)$ be the index such that $s_{t(j)}$ is chosen at the $j$-th iteration.
We say that price for solving {\bf MinDSattNN} is $N$ if
$N$ nodes are needed for discriminating the given attractors each with
at most $K$ noisy nodes.
In each iteration, a column index is added to $J$ and then 
price 1 is added to the total cost.
Afterward, we assign an average price of $P(x_l, j)$ to each $x_l$
at the $j$-th iteration, for $x_l\in U_j$,
where $U_j$ is the set $U$ at $j$-th iteration.
Therefore we have
$
P(x_l, j)=\left \{
\begin{array}{cl}
\frac{1}{|s_{t(j)}\cap U_j|}, & x_l \in s_{t(j)}\cap U_j,\\
0, & {\rm otherwise}.
\end{array}
\right.
$

We call $|s_{t(j)}\cap U_j|$  {\it coverage power} of $s_{t(j)}$.
After termination of Algorithm~\ref{al:aps},
the total number of nodes in the discriminator set should be $|J|$
from which 
$
|J|=\sum_{j=1}^{|J|}{ \sum_{l=1}^{M} P(x_l, j)}
$
holds.
Here we give a key inequality (A.1),
$
P(x_l, j)\leq
\displaystyle \frac{|J^*|}{M(2K+1)-\displaystyle \sum_{k=1}^{j-1}|s_{t(k)}\cap U_k|},
$
where we define the summation in the denominator to be 0 when $j=1$
because $s_{t(0)}=\emptyset$, and
the proof is given in Appendix~\ref{sec;keyeq}.

Let $J^*$ be an optimal solution of {\bf MinDSattNN}.
Then, we have
\begin{eqnarray*}
|J|&=& \sum_{j=1}^{|J|}{ \sum_{l=1}^{M} P(x_l, j)}
 = \sum_{j=1}^{|J|}(|U_j\cap s_{t(j)}|) P(x_l, j)\\
& \leq & \sum_{j=1}^{|J|}(|U_j\cap s_{t(j)}|)
\cdot \frac{|J^*|}{M(2K+1)-\displaystyle\sum_{k=1}^{j-1}|s_{t(k)}\cap U_k|}\\
& \leq & |J^*|\displaystyle\sum_{j=1}^{|J|}\left(\displaystyle\sum_{s=i_s}^{i_e}\frac{1}{M(2K+1)-s}\right )\\
& \leq & |J^*|\left(1 + \frac{1}{2}+\ldots + \frac{1}{M(2K+1)}\right),
\end{eqnarray*}
where $i_s=\sum_{k=1}^{j-1}|s_{t(k)}\cap U_k|$ and 
$i_e=\sum_{k=1}^{j}|s_{t(k)}\cap U_k|-1$,
and the second inequality comes from
$
\sum_{k=1}^{|J|-1}{|s_{t(k)}\cap U_k|}\geq 1
$.
Therefore, the approximation ratio is bounded by

$\displaystyle{
\frac{|J|}{|J^*|} \le 1 + \frac{1}{2}+\ldots + \frac{1}{M(2K+1)} 
< \ln(M(2K+1)) + 1.
}$

Hereafter we analyze the time complexity.
The number of iterations in {\bf While} loop is bounded by
$\min \left\{
|J^*|\cdot \ln (M (2K+1))+1, n\right\}.
$
and time complexity for each iteration of
{\bf While} loop is at most $O(m^2 \cdot n \log{n})$ by using merge sort.
Therefore, the total cost for Algorithm~\ref{al:aps}  is
$
O\left(m^2n\log{n} \cdot\right.$ 
$\left. \min
\left\{ |J^*| \cdot (\ln (M (2K+1))+1), n\right\}\right),
$
which is a polynomial order of $m$ and $n$.
\end{proof}

\section{Results of Computational Experiments}
\label{sec:biological}

All numerical experiments were conducted using Matlab on a PC with dual-core
3.4 GHz processor and 8 GB RAM.
Firstly, for each of {\bf MinDattNN} and {\bf MinDSattNN},
we conducted computational experiments using simulation data by
randomly generating a couple of attractors,
repeating the numerical experiment 10 times for each parameter set,
and then recording the average time and maximum approximation ratio.
For the discrimination of attractors,
results are listed in Table \ref{tab:n9},
where $len$ denotes the maximum length of attractors.
It is seen that
Algorithm 2 is much faster than Algorithm 1 and
the approximation ratios are not large.
For discrimination of singleton attractors, it is seen from Table \ref{tab:n8} that 
Algorithm~\ref{al:aps} is much faster than the IP-based method,
and the approximation ratios are much smaller than 
$\displaystyle{\ln(M (2K+1)) + 1}$.

Next, we also examined the efficiency of
the IP-based method and Algorithm~\ref{al:aps}
for the clean case of {\bf MinDSattNN},
we compared those with a previous one, {\bf SolveMinDiscSatt} \cite{Cheng2017b}.
It is seen from Table~\ref{tab:n13} that
the new methods are much more efficient even for the clean case.

\begin{table}
\caption{Numerical results on discrimination of attractors.}
\begin{center}
\resizebox{0.8\textwidth}{!}{
\begin{tabular}{|c c c c c c c|}
\hline
\multirow{2}{*}{$n$}&\multirow{2}{*}{$m$} &  \multirow{2}{*}{$K$} & $\multirow{2}{*}{len}$ & Time of   & Time of  & Approximation  \\
& & & & {\bf Algorithm 1} (s) &  {\bf Algorithm 2} (s) &  ratio \\
\hline
100 & 3 &1 & 3 & 63.6160& 0.0100& 2\\
100 & 5 & 1 & 5 & 209.09 & 0.0100 &2 \\
100 & 5& 2& 5 & 260982 & 1.3&1.7\\
1000 & 3& 3 & 1 &9902.3 &0.0100 &2\\
\hline
\end{tabular}
}

\end{center}
\label{tab:n9}
\end{table}

\begin{table}
\caption{Numerical results on discrimination of singleton attractors.}
\begin{center}
\resizebox{0.8\textwidth}{15mm}{
\begin{tabular}{|c c c c c c|}
\hline
\multirow{2}{*}{$n$}&\multirow{2}{*}{$m$} &\multirow{2}{*}{$K$} &Time of & Time & Approximation\\
&&&IP(s) & {\bf Algorithm~\ref{al:aps}} (s)&ratio\\
\hline
50 & 5 &1 & 0.109 &0.003 & 1.3333 \\
50 & 5 & 3 & 0.045 & 0.002 & 1.1538 \\
500 & 5& 1 & 0.137& 0.001& 1.3333 \\
500 & 5 & 3 &  661.4& 0.000 &1.2000 \\
5000&5&3&7063& 0.000&1.2000 \\
20000&5&3& 7064& 0.080 &1.2000 \\
20000&5&5& 6855 & 0.017&1.1000 \\
20000&5&10& 7168& 0.400 & 1.1000 \\
\hline
\end{tabular}
}
\end{center}
\label{tab:n8}
\end{table}

\begin{table}
\caption{Numerical results on the clean case of discrimination of singleton attractors.}
\begin{center}
\resizebox{0.8\textwidth}{!}{
\begin{tabular}{|c c c c c c|}
\hline
\multirow{2}{*}{$n$}&\multirow{2}{*} {$m$} & Time of &Time of & Time of & Approximation\\
& & {\bf SolveMinDiscSatt} (s) &IP (s) & {\bf Algorithm~\ref{al:aps}} (s) &  ratio \\
\hline
100 & 5 &0.4094 & 0.0657 &0.0125 & 1 \\
1000 & 5 &7.2797 & 0.2703 &0.0031 & 1 \\
10000 & 5 &1571.2 & 51.6 &0.0000 & 1 \\
20000 & 5 &5128.2 & 125.5 &0.0000 & 1 \\

\hline
\end{tabular}
}
\end{center}
\label{tab:n13}
\end{table}

\begin{table*}[t]
\caption{Results on four biological processes.}
\label{tbl:biological}
\centering
\resizebox{0.8\textwidth}{12mm}{
\begin{tabular}{|c|cc|cc|cc|}
\hline
& $n$ & $m$ & \multicolumn{2}{|c|}{CPU time (sec)} & \multicolumn{2}{|c|}{Identified Markers}\\
& & & Exact & Approx. & Exact & Approx\\
\hline
\hline

(1) & 40 & {3+(7)}& 612.11 & 0.04 & ZAP70, TCR, SLP76, SEK, RLK &
ZAP70, TCR, SLP76, SEK, RLK, TCRphosp \\
\hline
\multirow{2}{*}{(2)} & {90} & {$(4,4,4,$} & $\approx 10^6$ & 0.3 & AFF1, AKAP12, APLP2, CAV1, & AEBP1, AFF1, AICDA, AKT3, APLP2 
\\ 
 & & $4,4,4)$ & & & CCND2, HDAC9, INPPSD & CAV1, CCND2, HDAC9, INPP5D, PAK\\
\hline
\hline
(3) & 60 & 5 & 0.52 & 0.53 & wg1, WG1, EN1, PTC1, PH1, ptc2, PTC2, SMO3  
& wg1, WG1, en1, EN1, hh1, en2, EN2, hh2, PTC2 \\
\hline
(4) & 9 & 3 & 0.05 & 0.06 & Rb, TELase, Cyclin, E2F, ESE2
& p53, p16, Rb, TELase, Snai2, E2F \\
\hline

\end{tabular}
}
\end{table*}

Then, we conducted computational experiments
using BN models on the following four real biological processes,
where $K=1$ was used in all cases:
(1) Logical model analyzing T-cell activation
(\cite{klamt2006methodology}),
(2) IGVH mutational status in chronic lymphocytic leukemia
\cite{alvarez2015proteins},
(3) Segment polarity genes in {\it Drosophila Melanogaster}
\cite{albert2003topology},
(4) Tumorigenic transformation
of human epithelial cells \cite{mendez2017gene}. 
Note that the algorithms for {\bf MinDattNN} were used for (1) and (2),
whereas those for {\bf MinDSattNN} were used for (3) and (4).
The results are summarized in Table~\ref{tbl:biological}.
In this table,
$n$ and $m$ denote the numbers of genes and attractors,
where the periodic attractors are shown by a list of their periods.

In the BN model (1),
there exist 9 periodic attractors \cite{klamt2006methodology}.
However, distances among some attractors are less than or equal to 2.
Therefore, we discarded such attractors.
Finally, we only kept attractors 1, 2, 3 and 9 for verification
It is known that ZAP70, TCR, SLP76, SEK, and RLK play
important roles in T-cell development and lymphocyte activation or development
of the nervous system, while genes ZAP70, TCR are the ligand for TCRphosp. Therefore, the approximation algorithm may have found more important genes.
In addition, the approximation algorithm was much faster than the exact
algorithm.
These facts suggest the usefulness of the approximation algorithm.

In the BN model (2),
the exact and approximation algorithms identified 7 and 10 genes,
respectively.
Therefore, the resulting approximation ratio is $10/7$,
which is much smaller than $\ln(M (2K+1)) + 1$.
Besides, there are 6 common genes identified,
which encode
proteins involved in critical biological processes or diseases
like human child lymphoblastic leukemia, histone deacetylase and so on.
Among the other identified genes,
AKAP12 gene functions in binding to the regulatory subunit of PAK and
confining the holoenzyme to discrete locations within the cell.
AEBP1 gene encodes proteins that may function as a transcriptional
repressor and play a role in adipogenesis and smooth muscle cell differentiation.
AICDA encodes a RNA-editing deaminase.
AKT3 encodes proteins known to be regulators of cell signaling in response to insulin and growth factors.
These facts suggest the usefulness of both algorithms.
However, the approximation algorithm was much faster than the exact one.
Therefore, these results suggest again
the usefulness of the approximation algorithm.

In the BN model (3),
there are 10 singleton attractors \cite{albert2003topology}.
However,
since the distances among attractors 3, 4, and 6 are less than 3,
and the distances among attractors 7, 8, 9, and 10 are also less than 3,
we only keep the five attractors: 1, 2, 3, 6, 7. {There are 60 nodes in this BN, including 5 segment polarity genes (en, wg, ptc, ci, hh) and their proteins (EN, WG, PTC, CI, CIR, CIA, SMO, HH), one pair-rule gene (slp) and its protein SLP in one parasegment primordia (4 cells). 
Most markers identified
by two algorithms are the same
except that gene hh is not identified
by the exact algorithm
and transcription factors PTC and SMO are not identified
by the approximation algorithm.
It has been verified experimentally that binding of HH (protein of hh) would remove the inhibition of SMO.} 
It is seen from Table~\ref{tbl:biological} that
the approximation ratio by Algorithm~\ref{al:aps}, 
is $9/8=1.125$, which is reasonable.  

The BN model (4) includes
9 molecular players (transcription factors or signaling molecules) \cite{mendez2017gene}.
After solving a system of Boolean equations, three attractors were obtained,
which corresponded to three kinds of cells:
{\it epithelial cells}, {\it senescent cells} and
{\it mesenchymal stem-like cells}.
The obtained molecular players can be interpreted as master regulators
for each cell.
It is known that cells with mesenchymal stem-like phenotype have a strong
potential of transferring to cinomas.
Snai2 was chosen as a master regulator from the approximation algorithm (but not from the IP-based method).
By taking a further look at this molecule,
we can see that the activation of Snai2 enables cells to sustain proliferative signals and to evade growth suppressors by undergoing a de-differentiation process.
Thus it is activated in {\it mesenchymal stem-like} cells
but not in the other two kinds of cells.
This fact suggests the usefulness of the approximation algorithm.

\section*{Acknowledgements}
XQC was partially supported by National Science Foundation of China under grant numbers 11801434 and 3115200128.
WKC was partially supported by Hong Kong RGC GRF Grant no. 17301519,
National Natural Science Foundation of China Under Grant number 11671158,
and IMR and RAE Research fund from Faculty of Science, HKU.
TA was partially supported by JSPS KAKENHI Grant number 18H04413.

\bibliographystyle{ieeetr}
\bibliography{discnoisybib}           

\begin{thebibliography}{10}

\bibitem{yan2015spectrum}
G.~Yan, G.~Tsekenis, B.~Barzel, J.-J. Slotine, Y.-Y. Liu, and A.-L.
  Barab{\'a}si, ``Spectrum of controlling and observing complex networks,''
  {\em Nature Physics}, vol.~11, no.~9, p.~779, 2015.

\bibitem{liu2013observability}
Y.-Y. Liu, J.-J. Slotine, and A.-L. Barab{\'a}si, ``Observability of complex
  systems,'' {\em Proceedings of the National Academy of Sciences}, vol.~110,
  no.~7, pp.~2460--2465, 2013.

\bibitem{Shmulevich2002}
I.~Shmulevich and W.~Zhang, ``Binary analysis and optimization-based
  normalization of gene expression data,'' {\em Bioinformatics}, vol.~18,
  no.~4, pp.~555--565, 2002.

\bibitem{Watterson2008}
S.~Watterson, S.~Marshall, and P.~Ghazal, ``Logic models of pathway biology,''
  {\em Drug Discovery Today}, vol.~13, no.~9-10, pp.~447--456, 2008.

\bibitem{kauffman1969b}
S.~Kauffman, ``Homeostasis and differentiation in random genetic control
  networks,'' {\em Nature}, vol.~224, no.~5215, p.~177, 1969.

\bibitem{kauffman1993}
S.~A. Kauffman, {\em The Origins of Order: Self-Organization and Selection in
  Evolution}.
\newblock OUP USA, 1993.

\bibitem{cheng2010analysis}
D.~Cheng, H.~Qi, and Z.~Li, {\em Analysis and control of Boolean networks: a
  semi-tensor product approach}.
\newblock Springer Science \& Business Media, 2010.

\bibitem{laschov2013observability}
D.~Laschov, M.~Margaliot, and G.~Even, ``Observability of boolean networks: A
  graph-theoretic approach,'' {\em Automatica}, vol.~49, no.~8, pp.~2351--2362,
  2013.

\bibitem{li2015controllability}
R.~Li, M.~Yang, and T.~Chu, ``Controllability and observability of boolean
  networks arising from biology,'' {\em Chaos: An Interdisciplinary Journal of
  Nonlinear Science}, vol.~25, no.~2, p.~023104, 2015.

\bibitem{huang1999}
S.~Huang, ``Gene expression profiling, genetic networks, and cellular states:
  an integrating concept for tumorigenesis and drug discovery,'' {\em Journal
  of Molecular Medicine}, vol.~77, no.~6, pp.~469--480, 1999.

\bibitem{shen2010}
S.~S. Shen-Orr, R.~Tibshirani, P.~Khatri, D.~L. Bodian, F.~Staedtler, N.~M.
  Perry, T.~Hastie, M.~M. Sarwal, M.~M. Davis, and A.~J. Butte, ``Cell
  type--specific gene expression differences in complex tissues,'' {\em Nature
  Methods}, vol.~7, no.~4, p.~287, 2010.

\bibitem{bell2010}
D.~Bell, D.~Roberts, M.~Kies, P.~Rao, R.~S. Weber, and A.~K. El-Naggar, ``Cell
  type-dependent biomarker expression in adenoid cystic carcinoma: biologic and
  therapeutic implications,'' {\em Cancer}, vol.~116, no.~24, pp.~5749--5756,
  2010.

\bibitem{Qiu2015}
Y.~Qiu, X.~Cheng, W.-K. Ching, H.~Jiang, and T.~Akutsu, ``On observability of
  attractors in boolean networks,'' in {\em Bioinformatics and Biomedicine
  (BIBM), 2015 IEEE International Conference on}, pp.~263--266, IEEE, 2015.

\bibitem{Cheng2017b}
X.~Cheng, T.~Tamura, W.-K. Ching, and T.~Akutsu, ``Discrimination of singleton
  and periodic attractors in boolean networks,'' {\em Automatica}, vol.~84,
  pp.~205--213, 2017.

\bibitem{chalancon2012interplay}
G.~Chalancon, C.~N. Ravarani, S.~Balaji, A.~Martinez-Arias, L.~Aravind,
  R.~Jothi, and M.~M. Babu, ``Interplay between gene expression noise and
  regulatory network architecture,'' {\em Trends in genetics}, vol.~28, no.~5,
  pp.~221--232, 2012.

\bibitem{love2014}
M.~I. Love, W.~Huber, and S.~Anders, ``Moderated estimation of fold change and
  dispersion for rna-seq data with deseq2,'' {\em Genome Biology}, vol.~15,
  no.~12, p.~550, 2014.

\bibitem{wu2020orthogonal}
F.~Wu, J.~Shim, T.~Gong, and C.~Tan, ``Orthogonal tuning of gene expression
  noise using crispr--cas,'' {\em Nucleic acids research}, vol.~48, no.~13,
  pp.~e76--e76, 2020.

\bibitem{akutsu1998}
T.~Akutsu, S.~Kuhara, O.~Maruyama, and S.~Miyano, ``A system for identifying
  genetic networks from gene expression patterns produced by gene disruptions
  and overexpressions,'' {\em Genome Informatics}, vol.~9, pp.~151--160, 1998.

\bibitem{veliz2014}
A.~Veliz-Cuba, B.~Aguilar, F.~Hinkelmann, and R.~Laubenbacher, ``Steady state
  analysis of boolean molecular network models via model reduction and
  computational algebra,'' {\em BMC Bioinformatics}, vol.~15, no.~1, p.~221,
  2014.

\bibitem{zanudo2013}
J.~G. Za{\~n}udo and R.~Albert, ``An effective network reduction approach to
  find the dynamical repertoire of discrete dynamic networks,'' {\em Chaos: An
  Interdisciplinary Journal of Nonlinear Science}, vol.~23, no.~2, p.~025111,
  2013.

\bibitem{rajagopalan1993primal}
S.~Rajagopalan and V.~V. Vazirani, ``Primal-dual rnc approximation algorithms
  for (multi)-set (multi)-cover and covering integer programs,'' in {\em
  Proceedings of 1993 IEEE 34th Annual Foundations of Computer Science},
  pp.~322--331, IEEE, 1993.

\bibitem{klamt2006methodology}
S.~Klamt, J.~Saez-Rodriguez, J.~A. Lindquist, L.~Simeoni, and E.~D. Gilles, ``A
  methodology for the structural and functional analysis of signaling and
  regulatory networks,'' {\em BMC Bioinformatics}, vol.~7, no.~1, p.~56, 2006.

\bibitem{alvarez2015proteins}
M.~C. {\'A}lvarez-Silva, S.~Yepes, M.~M. Torres, and A.~F.~G. Barrios,
  ``Proteins interaction network and modeling of igvh mutational status in
  chronic lymphocytic leukemia,'' {\em Theoretical Biology and Medical
  Modelling}, vol.~12, no.~1, p.~12, 2015.

\bibitem{albert2003topology}
R.~Albert and H.~G. Othmer, ``The topology of the regulatory interactions
  predicts the expression pattern of the segment polarity genes in drosophila
  melanogaster,'' {\em Journal of Theoretical Biology}, vol.~223, no.~1,
  pp.~1--18, 2003.

\bibitem{mendez2017gene}
L.~F. M{\'e}ndez-L{\'o}pez, J.~Davila-Velderrain,
  E.~Dom{\'\i}nguez-H{\"u}ttinger, C.~Enr{\'\i}quez-Olgu{\'\i}n, J.~C.
  Mart{\'\i}nez-Garc{\'\i}a, and E.~R. Alvarez-Buylla, ``Gene regulatory
  network underlying the immortalization of epithelial cells,'' {\em BMC
  Systems Biology}, vol.~11, no.~1, p.~24, 2017.

\end{thebibliography}

\appendix
\section{Proof of Key Inequality (A.1) in Theorem~\ref{thm:singleton}}    
\label{sec;keyeq}

Let $\alpha$ be the average price in the $j$-th iteration, that is,
$\alpha = P(x_l, j)$.
Let $J^*$ represent an optimal discriminator set.
When $j=1$, we need to prove
$
\alpha \leq \frac{|J^*|}{M(2K+1)}.
$
Note that in each iteration, we choose a node with the maximum coverage power (or the minimum average price), thus we have
$
\alpha \leq \frac{1}{|s_{k'}\cap U|}, \ \ k'\in J^*
$
which indicates
$
|J^*| \geq \alpha
\left( \sum_{k' \in J^*} |s_{k'}\cap U| \right).
$
Let ${\bf y}^*$ be the binary vector corresponding to $J^*$.
Then we have
$
\left( \sum_{k'\in J^*} |s_{k'}\cap U| \right) \geq M(2K+1)
$
from
$
C_{att}[i, -] \cdot ({\bf y}^*)^T \geq 2K+1\ \ \  (i=1,2,\ldots,M).
$
Thus the inequality is satisfied when $j=1$.

Before presenting our proof for the general case,
we give a definition of a special class of sets (multi-set)
by allowing it as a collection of objects
and any object may have duplications in the same set.
For example,  suppose
$S_1=\{x_1, x_1, x_2, x_3, x_3, x_3\}$ and
$S_2=\{x_1, x_3, x_4, x_4\}$.
Then $|S_i|$ is the cardinal number of the set $S_i$, which equals to the number of elements in $S_i$ and $|S_1|=6$ and $|S_2|=4$.
Moreover, we let
$
S_1\cup S_2=\{x_1, x_1, x_1,x_2, x_3, x_3, x_3, x_3, x_4, x_4\}
$
be the union of these two sets without removing duplications,
$
S_1\cap S_2=\{x_1, x_3\}
$
be the intersection of two sets by keeping the common elements of $S_1$ and $S_2$ with smaller frequency, and
$
S_1-S_2=\{x_1, x_2, x_3, x_3\}
$
be the relative complement of $S_1$ in $S_2$ by removing those common elements of $S_1$ and $S_2$ with smaller duplicates from $S_1$.
We will then apply this new definition in the following analysis.

Let
$\displaystyle{
U'=\{\underbrace{x_1, x_1, \ldots, x_1}_{2K+1}, \ldots, \underbrace{x_M, x_M, \ldots, x_M}_{2K+1}\}
}$
where $|U'|=M(2K+1)$.
Then the greedy algorithm can be rewritten as follows.

\setcounter{algorithm}{0}

{\centering
\begin{minipage}{0.9\textwidth}
 \begin{algorithm}[H]
    \caption{Greedy algorithm {\bf MAPMinDSattNN}}
    \hspace*{\algorithmicindent} \textbf{Input:} set of POAs $U'$, set of nodes $S$ \\
    \hspace*{\algorithmicindent} \textbf{Output:} set of nodes $J$
    \begin{algorithmic}[1]
    \State $J\gets \emptyset$ 
    \While{$U' \neq \emptyset$}
    \State Find { $s_j\in S$} with maximum $|s_j\cap U'|$,
    $J\gets J\cup \{j\}, S\gets S-\{s_j\}$;
    \EndWhile
    \end{algorithmic}
    \end{algorithm}
    \end{minipage}
}

Let $J_{j-1}=\{t(1), t(2), \ldots, t(j-1)\}$ and $U'_j$
denote $J$ and $U$ after the $(j-1)$-th iteration of 
this modified algorithm, respectively.
Let $re(x_{T(i_1, i_2, m)})$ denote the number of repetitions of
$x_{T(i_1, i_2, m)}$ in $U'_{j}$.
Since $J_{j-1}\cup (J^*-J_{j-1})$ is an optimal solution,
it is easy to see 
$
H(Att[i_1, J^*-J_{j-1}], Att[i_2, J^*-J_{j-1}])\geq re(x_{T(i_1, i_2, m)}).
$

In the $j$-th iteration, Algorithm~\ref{al:aps} will choose
$s_{t(j)}$ with the maximum coverage power,
which means that the average price will be minimized.
Then for $k'\in J^*-J_{j-1}$, we have
$
\alpha \leq  \frac{1}{|s_{k'}\cap U'_{j}|},
$
which indicates
$
\alpha \left( \sum_{k'\in J^*- J_{j-1}} {|s_{k'}\cap U'_{j}|} \right) \leq \sum_{k'\in J^*- J_{j-1}} 1 .
$
Thus we have
\begin{eqnarray}
\alpha & \leq & 
\frac{ |J^*-J_{j-1}|}{|( \bigcup_{k'\in J^*- J_{j-1}} s_{k'})\cap U'_{j}|}.
\label{eq:alpha}
\end{eqnarray}
Recall that
$
H(Att[i_1, J^*-J_{j-1}], Att[i_2, J^*-J_{j-1}])\geq re(x_{T(i_1, i_2, m)}).
$
This means that the number counting together all the repetitions of
$x_{T(i_1, i_2, m)}$ in set $s_k', k' \in J^*-J_{j-1}$ should be
greater than or equal to  $re(x_{T(i_1, i_2, m)})$, then we have
$
\left( \bigcup_{k'\in J^*- J_{j-1}} s_{k'} \right) \cap U'_{j}=U'_{j}.
$
Here the union operation should be over multi-sets.
From this and Ineq.~(\ref{eq:alpha}), we have
\begin{eqnarray*}
\alpha 
&\leq & \frac{|J^*-J_{j-1}|}{|( \bigcup_{k'\in J^*- J_{j-1}} s_{k'})\cap U'_{j}|}
~=~ \frac{|J^*-J_{i}|}{|U'_{j}|} \\
& \leq &\frac{|J^*|}{\displaystyle M(2K + 1)-\sum_{k=1}^{j-1}|s_{t(k)}\cap U'_k|}\\
&= & \frac{|J^*|}{\displaystyle M(2K + 1)-\sum_{k=1}^{j-1}|s_{t(k)}\cap U_k|}.
\end{eqnarray*}
Here $U_k$ is the set $U$ in the $k$-th iteration in Algorithm~\ref{al:aps}.
The last equality holds because there is no duplicated elements in $s_{t(k)}$,
and those elements are the same in $U_k^{'}$ and $U_k$ without considering
the number of duplications of each element.
Then the inequality is proved.

%
%

\end{document}